\documentclass[pra,aps,onecolumn,amsmath,amssymb,superscriptaddress,notitlepage,floatfix,12pt,showkeys]{revtex4}
\usepackage[dvipdfmx]{graphicx}
\usepackage{amsmath, amssymb, amsthm}
\usepackage{mathtools}
\usepackage{algorithm, algpseudocode}
\usepackage{color}
\usepackage{float}
\usepackage{bbm}

\setcitestyle{super}
\usepackage{etoolbox}
\makeatletter
\patchcmd{\NAT@citesuper}%
   {\textsuperscript{#1}}{\textsuperscript{[#1]}}{}{}
\makeatother

\def\Bra#1{\left\langle#1\right|}
\def\Ket#1{\left|#1\right\rangle}
\newcommand{\bra}[1]{\langle{#1}|}
\newcommand{\ket}[1]{|{#1}\rangle}

\DeclareMathOperator{\rank}{rank}
\DeclareMathOperator{\tr}{Tr}
\DeclareMathOperator{\id}{id}

\theoremstyle{definition}
\newtheorem{theorem}{Theorem}
\newtheorem{lemma}[theorem]{Lemma}

\newtheorem{proposition}[theorem]{Proposition}
\theoremstyle{definition}
\newtheorem{definition}[theorem]{Definition}

\newtheorem{application}{Application}
\theoremstyle{remark}

\begin{document}

\title{Distributed Encoding and Decoding of Quantum Information over Networks}

\author{\textit{Hayata Yamasaki*}}
\email{yamasaki@eve.phys.s.u-tokyo.ac.jp}
\affiliation{Department of Physics, Graduate School of Science, The University of Tokyo, 7--3--1 Hongo, Bunkyo-ku, Tokyo 113--0033, Japan}
\author{\textit{Mio Murao*}}
\email{murao@phys.s.u-tokyo.ac.jp}
\affiliation{Department of Physics, Graduate School of Science, The University of Tokyo, 7--3--1 Hongo, Bunkyo-ku, Tokyo 113--0033, Japan}

\begin{abstract}
   Encoding and decoding \textit{quantum information} in a multipartite quantum system are indispensable for quantum error correction and also play crucial roles in multiparty tasks in distributed quantum information processing such as quantum secret sharing. To quantitatively characterize nonlocal properties of multipartite quantum transformations for encoding and decoding, we analyze entanglement costs of encoding and decoding quantum information in a multipartite quantum system distributed among spatially separated parties connected by a network. This analysis generalizes previous studies of entanglement costs for preparing bipartite and multipartite quantum states and implementing bipartite quantum transformations by entanglement-assisted local operations and classical communication (LOCC). We identify conditions for the parties being able to encode or decode quantum information in the distributed quantum system deterministically and exactly, when inter-party quantum communication is restricted to a tree-topology network. In our analysis, we reduce the multiparty tasks of implementing the encoding and decoding to sequential applications of one-shot zero-error quantum state splitting and merging for two parties. While encoding and decoding are inverse tasks of each other, our results suggest that a quantitative difference in entanglement cost between encoding and decoding arises due to the difference between quantum state merging and splitting.
\end{abstract}

\keywords{quantum encoding and decoding, multipartite entanglement transformation, quantum network, entanglement cost, distributed quantum information processing}

\maketitle

\section{\label{sec:intro}Introduction}

Encoding and decoding quantum information in a multipartite quantum system are fundamental building blocks in quantum information processing.
In particular, quantum error correcting codes~\cite{G,D,T2,B} require such encoding and decoding between a logical state and an entangled physical state of a multipartite system.
Quantum information is represented by this logical state,
and these encoding and decoding are the inverse transformation of each other, mathematically represented by isometries.
The encoding and decoding have to be performed so that coherence of these states is kept; that is, an arbitrary superposition of the logical state should be preserved without revealing the classical description of the logical state.
In addition to quantum information processing, the concept of encoding and decoding nowadays has interdisciplinary roles in analyzing many-body quantum systems exhibiting nonlocal features, such as topological order in quantum phase of matter,\cite{K,K2} holographic principle in quantum gravity,\cite{A,P} and eigenstate thermalization hypothesis in statistical physics.\cite{F}

\begin{figure}[b]
    \centering
    \includegraphics[width=0.9\linewidth]{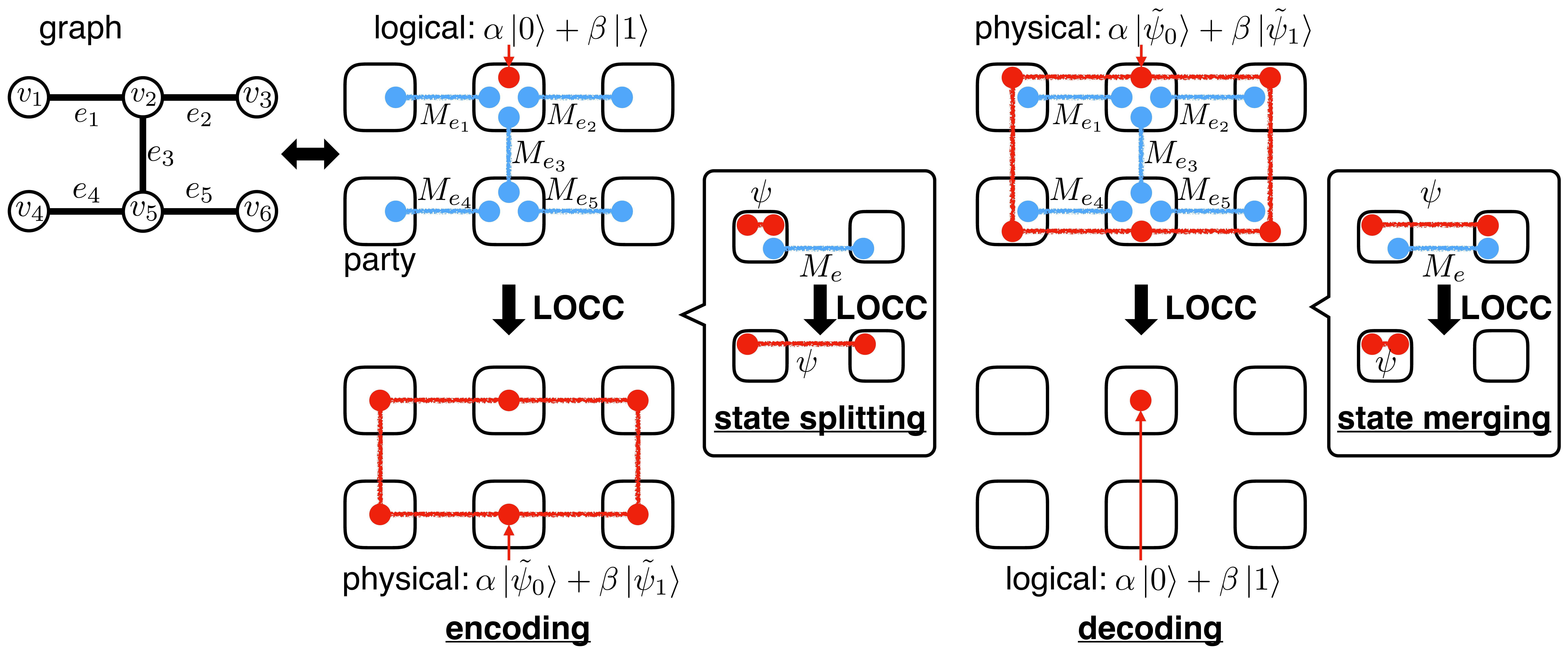}
    \caption{\label{fig:intro}Encoding and decoding quantum information in a multipartite quantum system shared among spatially separated parties, where the quantum information is represented by \textit{unknown} quantum states illustrated by red circles. The parties are connected by a network of noiseless quantum channels represented by a graph, so that the parties can sequentially apply exact state splitting to spread quantum information for encoding, and exact state merging to concentrate quantum information for decoding. Under LOCC, a single use of each noiseless quantum channel represented by an edge $e$ of the graph is equivalent to that of a maximally entangled state $\Ket{\Phi_{M_e}^+}$ illustrated by a pair of blue circles connected by a line, where $M_e$ is the Schmidt rank of $\Ket{\Phi_{M_e}^+}$.}
\end{figure}

These encoding and decoding are also indispensable when we aim to perform distributed quantum information processing, where spatially separated parties connected by a network for quantum communication cooperate in achieving an information processing task.
Distributed quantum information processing is considered to be a promising candidate for realizing large-scale quantum computation, since there exists technical difficulty in increasing the number of low-noise qubits in one quantum device.
Moreover, encoding and decoding are especially crucial for some multiparty cryptographic tasks such as quantum secret sharing.~\cite{B7,C,G2}
In these distributed settings, a multipartite system encoding a logical state is distributed among these spatially separated parties.
Thus, encoding and decoding are nonlocal transformations over all the parties, and the nonlocal properties of transformations for encoding and decoding lead to cost in implementations of the encoding and decoding.

This paper aims to quantitatively characterize nonlocal properties of transformations for encoding and decoding in the distributed settings, adopting a resource theoretical approach.
Resource theories~\cite{C8} aim to provide a quantitative understanding and an operational meaning of abstract physical attributes,
such as entanglement in quantum mechanics~\cite{H2,P2} and entropy in classical thermodynamics.\cite{L}
In the theory of entanglement, a class of operations, such as local operations and classical communication (LOCC), is adopted as free operations, to which available operations for achieving a given task are restricted.
Operations beyond LOCC have to be performed with assistance of nonlocal resource states, such as bipartite maximally entangled states.
Under LOCC, a single use of a noiseless quantum channel and that of a maximally entangled state are at equivalent cost by means of quantum teleportation achieving quantum communication.\cite{B5}
For a bipartite state, the minimal amount of quantum communication required for preparing the state provides a well-established entanglement measure quantifying a nonlocal property of the state, called the \textit{entanglement cost} of the state.\cite{B2,H,T}
The entanglement cost of a bipartite \textit{state} also generalizes to that required for spatially separated parties implementing a given nonlocal \textit{transformation}, such as global unitaries~\cite{Z,E,C3,C2,N,Y5,C4,Y,S,S2,Y4,Y2,S3,X,C5,V,Y3,W,W2,W5} and global measurements,\cite{J,B3,B4} although this generalization usually accompanies challenging optimization and has been analyzed only in special cases to date.
Another direction is generalization of a \textit{bipartite} state to a \textit{multipartite} state~\cite{Y6,G3,Y7} while analysis of multipartite entanglement is also challenging.\cite{E2,W3,B8}
Regarding a multipartite generalization in terms of quantum communication, Reference~\cite{Y6} formulates that required for preparing a multipartite state shared among parties using a network~\cite{K6} of the noiseless quantum channels.

Based on these previous works, we formalize entanglement costs characterizing the nonlocal properties of transformations for encoding and decoding.
In our setting, as illustrated in \textbf{Figure~\ref{fig:intro}}, $N$ parties are connected by a network of the noiseless quantum channels.
The network topology is represented by a graph in graph theory~\cite{B6} in terms of vertices and edges.
Any connected network of $N$ parties requires at least $N-1$ channels.
If an $N$-vertex connected graph has exactly $N-1$ edges, the graph is called a tree.
Using the network, the parties can spread and concentrate quantum information of \textit{unknown} states so as to encode and decode quantum information in a distributed system according to a given isometry representing the encoding and decoding.
The amount of quantum communication required for spreading and concentrating quantum information over the network characterizes nonlocal properties of the isometry.
Due to the equivalence between the noiseless quantum channel and the maximally entangled state,
a collection of maximally entangled states distributed according to the network topology comprises the initial resource state for spreading and concentrating quantum information by LOCC\@.
We assume that LOCC is free and consider this type of initial resource state consisting of \textit{bipartite} entanglement motivated by quantum communication on networks, while more general resource states exhibiting \textit{multipartite} entanglement may outperform bipartite entanglement in terms of other figures of merit than quantum communication.~\cite{Y12}
The minimal total amount of quantum communication is evaluated by the entanglement entropy of the maximally entangled states for each edge, which we call the \textit{entanglement costs of spreading and concentrating quantum information}.
The entanglement cost of spreading quantum information characterizes the encoding, and that of concentrating characterizes the decoding.

In this paper, we evaluate the entanglement costs of spreading and concentrating quantum information over any given tree-topology network for an \textit{arbitrarily} given isometry, which differs from the works presented in References~\cite{F3,S5} for implementing \textit{particular} isometries in the context of quantum secret sharing.
To analyze the entanglement costs,
we reduce spreading and concentrating quantum information to sequential applications of exact state merging and splitting for two parties established in Reference~\cite{Y11}, as illustrated in Figure~\ref{fig:intro}.
Regarding spreading quantum information, we use exact state splitting to provide an algorithm and derive the optimal entanglement cost of spreading quantum information, which is given in terms of the rank of a state defined with respect to each edge of the given tree.
We also provide another algorithm achieving concentrating quantum information using exact state merging, and show that the entanglement cost of concentrating quantum information can be reduced compared to that of spreading quantum information.
During spreading and concentrating \textit{quantum} information, coherence has to be kept, and this point is contrasted with encoding and decoding \textit{classical} information in quantum states shared among multiple parties investigated in the context of a type of quantum secret sharing based on LOCC state distinguishability.\cite{C6,R,Y10,W6,B11,L3}
Our algorithms for spreading and concentrating quantum information are applicable to any isometry representing encoding and decoding and provide an algorithm for one-shot distributed source compression~\cite{D8,D9,A8} applicable to arbitrarily-small-dimensional systems and a general algorithm for LOCC-assisted decoding of shared quantum information having studied in the context of quantum secret sharing.~\cite{G4}

The rest of this paper is organized as follows.
In Section~\ref{sec:preliminaries}, we introduce and formally define entanglement costs of spreading and concentrating quantum information over tree-topology networks, as well as summarizing the preceding results in Reference~\cite{Y11} on exact state merging and splitting.
We present our results on spreading quantum information over networks in Section~\ref{sec:encoding} and those on concentrating quantum information over networks in Section~\ref{sec:decoding}.
Applications of our results are provided in Section~\ref{sec:example}.
Our conclusion is given in Section~\ref{sec:conclusion}.

\section{\label{sec:preliminaries}Settings}
In this section, we define the tasks of spreading and concentrating quantum information over networks and the entanglement costs in Section~\ref{sec:task}.
We also summarize the results on exact state merging and splitting in Section~\ref{sec:merge_split_summary}.
In the following, for any finite-dimensional Hilbert space $\mathcal{H}$, the set of bounded operators on $\mathcal{H}$ is denoted by $\mathcal{B}\left(\mathcal{H}\right)$, and the set of density operators on $\mathcal{H}$ is denoted by $\mathcal{D}\left(\mathcal{H}\right)$.
We let superscripts of each ket or operator represent the systems to which the ket belongs or on which the operator acts.

\subsection{\label{sec:task}Entanglement cost of spreading and concentrating quantum information}

We introduce notations for describing the tasks on networks.
A quantum communication network among $N$ parties is represented by a graph $G=(V,E)$,
where each of the $N$ vertices $v\in V=\{v_1,v_2,\ldots,v_N\}$ represents one of the $N$ parties, and each edge $e=\{v_k,v_{k'}\}\in E$ represents a bidirectional noiseless quantum channel between $v_k$ and $v_{k'}$.
We assume that the $N$ parties can freely perform local operations and classical communication (LOCC).
Regarding a formal definition of LOCC, refer to Reference~\cite{C7}.
When LOCC is free, quantum communication of a state of an $M_e$-dimensional system is achieved by LOCC assisted by a maximally entangled state $\Ket{\Phi_{M_e}^+}^{e}\coloneqq\frac{1}{\sqrt{M_e}}\sum_{l=0}^{M_e-1}\Ket{l}^{v_k}\otimes\Ket{l}^{v_{k'}}$ of the Schmidt rank $M_e$ shared between $v_k$ and $v_{k'}$,
where the superscript $e=\{v_k,v_{k'}\}$ represents a state shared between $v_k$ and $v_{k'}$ connected by $e = \left\{v_k,v_{k'}\right\}$.
We consider the \textit{initial resource state} $\bigotimes_{e\in E}\Ket{\Phi_{M_e}^+}^e$ for $G=(V,E)$ as a resource for quantum communication on the network.

\begin{figure}[t]
    \centering
    \includegraphics[width=0.4\linewidth]{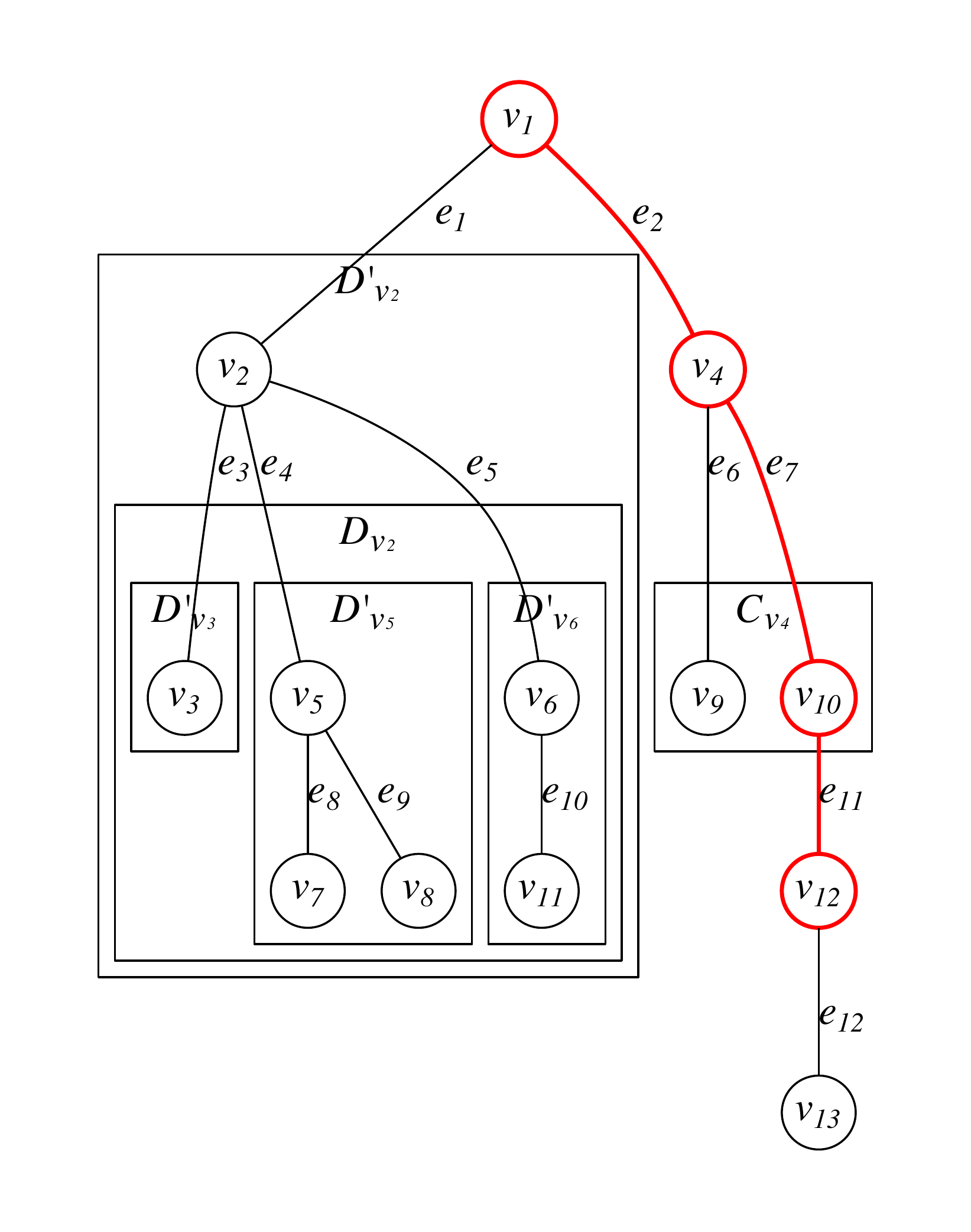}
    \caption{Notations for a tree $T=(V,E)$. We always designate $v_1\in V$ as the root of the tree. We label the other vertices so that, for any path connecting $v_1$ and another vertex, the closer to $v_1$ is any vertex $v_k$ on the path, the smaller is the label $k$, which we call an \textit{ascending labeling} of the vertices. For example, on the red bold path connecting the parties $v_1$ and $v_{12}$ in the figure, the vertices have to be labeled in ascending order $v_1,v_4,v_{10},v_{12}$. For any $v_k\in V$, let $C_{v_k}$, $D_{v_k}$, and $D'_{v_k}$ denote the set of $v_k$'s children, the set of $v_k$'s descendants, and the set of $v_k$ itself and $v_k$'s descendants, respectively.}
\label{fig:tree}
\end{figure}

Similarly to Reference~\cite{Y6},
we focus on analyzing networks represented by a class of graphs called \textit{trees},
which is defined as a graph containing no cycle as its subgraph.
We let $T=(V,E)$ denote a tree,
and our notations for trees are summarized in \textbf{Figure~\ref{fig:tree}}.
Given any tree $T=(V,E)$, we always designate $v_1\in V$ as the root of the tree.
In addition, we label the $N$ vertices $v_1,\ldots,v_N$ of $T$ so that, for any $v_k\neq v_1$, any vertex $v_{k'}$ on the path connecting the vertices $v_1$ and $v_k$ satisfies $k\geqq k'$; that is, the vertices are labeled in ascending order on such paths, as illustrated in Figure~\ref{fig:tree}.
We call this type of labeling of the vertices an \textit{ascending labeling} of the vertices.
For any vertex $v_k\in V$ of the rooted tree, we let $C_{v_k}$ denote the set of $v_k$'s children, $D_{v_k}$ the set of $v_k$'s descendants, and $D'_{v_k}$ the set of $v_k$ itself and $v_k$'s descendants.
For any non-root vertex $v_k\in V\setminus\left\{v_1\right\}$, we let $p\left(v_k\right)\in V$ denote the parent of $v_k$.
Any edge $e\in E$ of the rooted tree can be written as $e=\{p\left(v_k\right),v_k\}$.

Given a network represented by any graph $G=(V,E)$ in general,
we define the tasks of spreading and concentrating quantum information according to a given isometry representing encoding or decoding.
A system $\mathcal{H}$ for logical states is located at one of the $N$ parties,
and we always assign $v_1\in V$ as the party where $\mathcal{H}$ is located.
We let $D$ denote the dimension of $\mathcal{H}$, that is, $D=\dim\mathcal{H}$.
We write the computational basis of $\mathcal{H}$ as
$\{\Ket{l}\in\mathcal{H}:l=0,1,\ldots,D-1\}$.
In addition, the $N$ parties share a multipartite system $\tilde{\mathcal{H}}$ for physical states.
The system $\tilde{\mathcal{H}}$ is required to be spanned by a set of $D$ orthonormal pure states
$\left\{\ket{\tilde{\psi}_l}^{v_1\cdots v_N}\in\tilde{\mathcal{H}}:l=0,1,\ldots,D-1\right\}$.
For each $v_k\in V$, we let $\tilde{\mathcal{H}}^{v_k}$ denote a part of the shared multipartite system $\tilde{\mathcal{H}}$ located at the party $v_k$.
Note that $\dim\tilde{\mathcal{H}}^{v_k}$ is arbitrary as long as satisfying $\dim\tilde{\mathcal{H}}=\dim\mathcal{H}=D$,
and hence, $\tilde{\mathcal{H}}$ is a \textit{subspace} of the Hilbert space consisting of these subsystems for the $N$ parties, that is,
$\tilde{\mathcal{H}}\subset\bigotimes_{v\in V}\tilde{\mathcal{H}}^{v}$.
We consider encoding and decoding as linear bijective maps between $\mathcal{B}\left(\mathcal{H}\right)$ and $\mathcal{B}\left(\tilde{\mathcal{H}}\right)$ mapping the basis states of $\mathcal{H}$ and $\tilde{\mathcal{H}}$ as
$\ket{l}\in\mathcal{H}\leftrightarrow\ket{\tilde{\psi}_l}\in\tilde{\mathcal{H}}$ for each $l\in\left\{0,\ldots,D-1\right\}$.
The encoding map is represented by an isometry $U$ from $\mathcal{H}$ to $\tilde{\mathcal{H}}$ satisfying
$\ket{\tilde{\psi}_l}=U\Ket{l}$.
Encoding refers to transformation from $\rho\in\mathcal{D}\left(\mathcal{H}\right)$ into $U\rho U^\dag\in\mathcal{D}\left(\tilde{\mathcal{H}}\right)$, and decoding refers to the inverse transformation represented by $U^\dag$.
The formal definitions of the tasks of spreading and concentrating quantum information are given in terms of the LOCC framework as follows.
Note that the tasks are performed deterministically and exactly.
\begin{definition}
    \textit{Spreading and concentrating quantum information.}
    Spreading quantum information over a given graph $G=(V,E)$
    for a given isometry $U$
    is a task for the $N$ parties $v_1,\ldots,v_N\in V$ to apply $U$ to an arbitrary \textit{unknown} input state $\rho\in\mathcal{D}\left(\mathcal{H}\right)$ of one party $v_1\in V$ to share $U\rho U^\dag\in\mathcal{D}\left(\tilde{\mathcal{H}}\right)$ among the $N$ parties by performing an LOCC map $\mathcal{S}$ assisted by an initial resource state $\bigotimes_{e\in E}\Ket{\Phi_{M_e}^+}^e$, that is,
    \begin{equation}
        \label{eq:encoding}
        \mathcal{S}\left(\rho\otimes\bigotimes_{e\in E}\Ket{\Phi_{M_e}^+}\Bra{\Phi_{M_e}^+}\right)=U\rho U^\dag.
    \end{equation}
    Concentrating quantum information over $G=(V,E)$
    for $U$
    is a task for the $N$ parties $v_1,\ldots,v_N\in V$ to apply $U^\dag$ to a shared input state $U\rho U^\dag\in\mathcal{D}\left(\tilde{\mathcal{H}}\right)$ corresponding to an arbitrary \textit{unknown} state $\rho\in\mathcal{D}\left(\mathcal{H}\right)$ to recover $\rho$ at one party $v_1\in V$ by performing an LOCC map $\mathcal{C}$ assisted by $\bigotimes_{e\in E}\Ket{\Phi_{M_e}^+}^e$, that is,
    \begin{equation}
        \label{eq:decoding}
        \mathcal{C}\left(U\rho U^\dag\otimes\bigotimes_{e\in E}\Ket{\Phi_{M_e}^+}\Bra{\Phi_{M_e}^+}\right)=\rho.
    \end{equation}
\end{definition}

We analyze minimum requirements for initial resource states achieving spreading and concentrating quantum information.
In the same way as the case of analyzing the entanglement cost of multipartite states construction,~\cite{Y6}
given any graph $G=(V,E)$,
the entanglement cost of consuming the bipartite maximally entangled state $\Ket{\Phi^+_{M_e}}^e$ for each $e\in E$ of the initial resource state $\bigotimes_{e\in E}\Ket{\Phi^+_{M_e}}$ is identified by the entanglement entropy of $\Ket{\Phi^+_{M_e}}^e$, that is, $\log_2 M_e$.
If a sufficiently large amount of entanglement is available for each edge, there exist trivial algorithms for achieving spreading and concentrating quantum information, simply using quantum teleportation~\cite{B5} so that the party $v_1$ can locally perform any given isometry on the unknown input state.
In contrast, we aim to reduce the total amount of entanglement $\sum_{e\in E}\log_2 M_e$ required for spreading and concentrating quantum information.
Note that $\sum_{e\in E}\log_2 M_e$ can be regarded as the total amount of quantum communication available on the network when LOCC can be freely performed, as explained in Reference~\cite{Y6}.

\begin{definition}
\textit{Entanglement costs of spreading and concentrating quantum information.}
The \textit{entanglement cost of spreading quantum information} over a given graph $G=(V,E)$ for a given isometry $U$ is a family
${\left(\log_2 M_e\right)}_{e\in E}$
identifying an initial resource state achieving spreading quantum information over $G$ for $U$ minimizing
$\sum_{e\in E}\log_2 M_e$.
The \textit{entanglement cost of concentrating quantum information} over $G$ for $U$ is a family
${\left(\log_2 M_e\right)}_{e\in E}$
identifying an initial resource state achieving concentrating quantum information over $G$ for $U$ minimizing
$\sum_{e\in E}\log_2 M_e$.
\end{definition}

\begin{figure}[t]
    \centering
    \begin{minipage}[t]{0.45\hsize}
        \centering
        \includegraphics[width=0.9\linewidth]{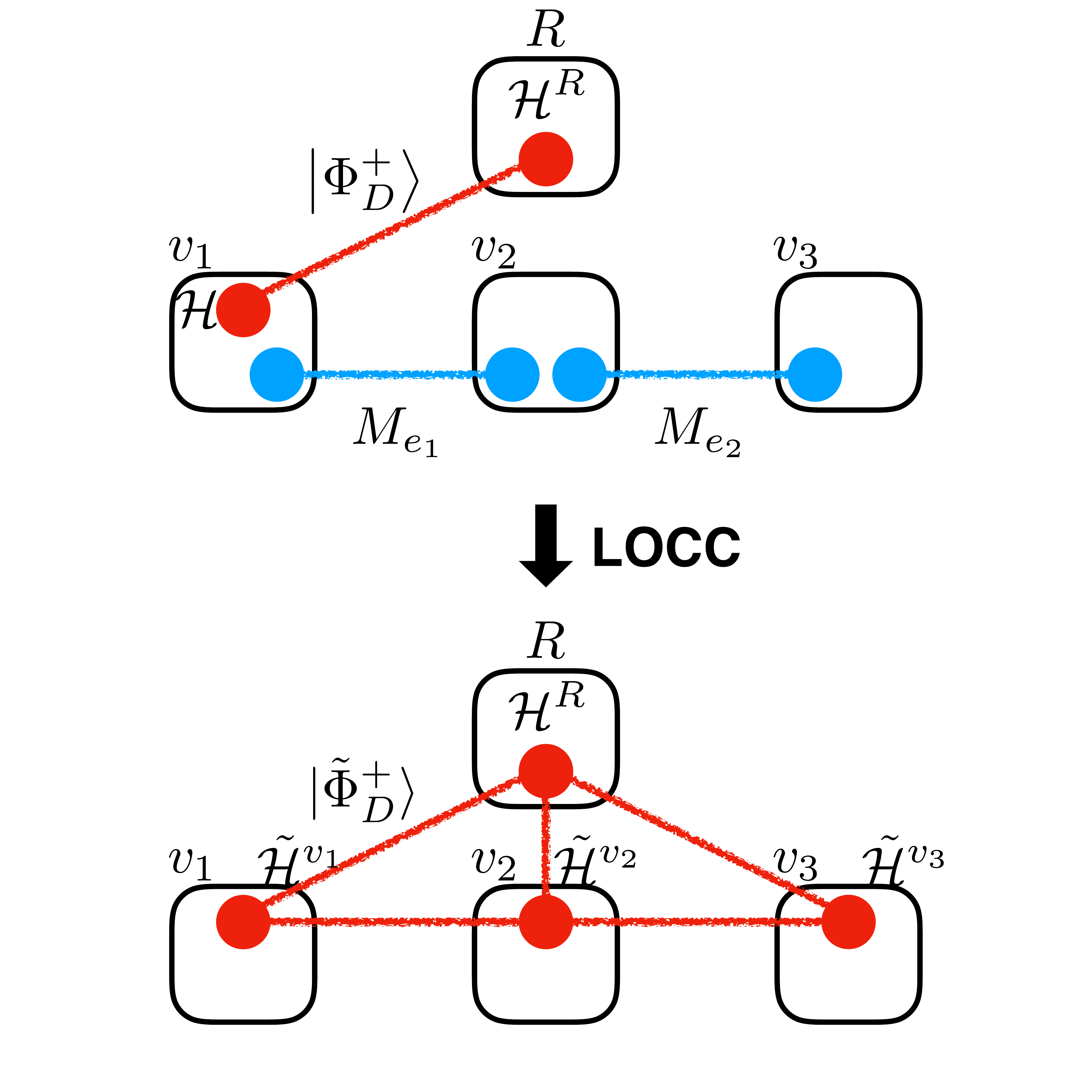}
        \caption{\label{fig:encoding}A state transformation task for three parties $v_1$, $v_2$, and $v_3$ equivalent to spreading quantum information over a line-topology network, where the system $\mathcal{H}$ for logical states is located at $v_1$. The initial state $\Ket{\Phi_D^+}$ and the final state $\ket{\tilde{\Phi_D^+}}$ are defined as Equations~\eqref{eq:data_maximally_entangled_state} and~\eqref{eq:code_maximally_entangled_state}, respectively.}
    \end{minipage}
    \begin{minipage}[t]{0.08\hsize}
        \hspace{2mm}
    \end{minipage}
    \begin{minipage}[t]{0.45\hsize}
        \centering
        \includegraphics[width=0.9\linewidth]{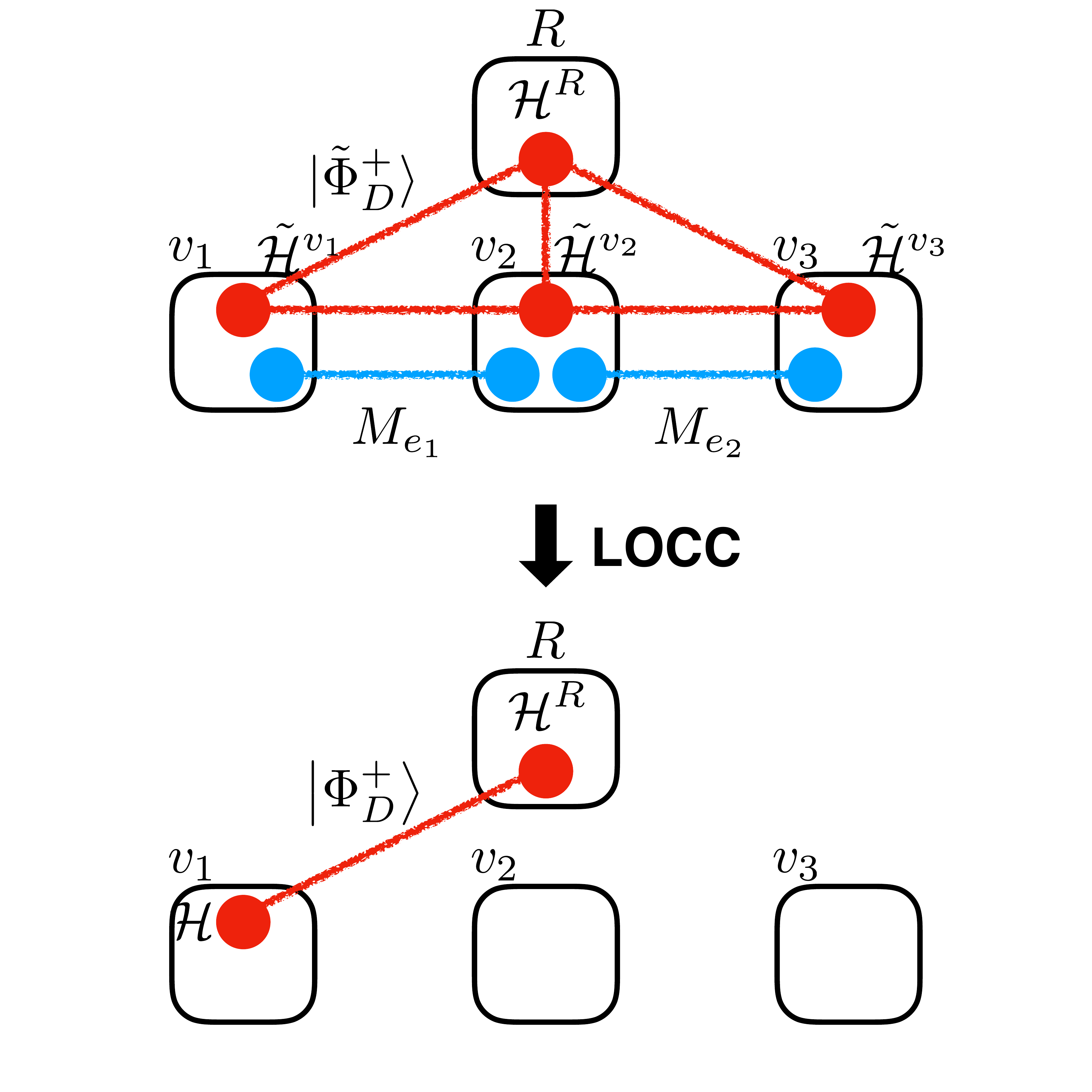}
        \caption{\label{fig:decoding}A state transformation task equivalent to concentrating quantum information over the same network as Figure~\ref{fig:encoding}. The notations are the same as those in Figure~\ref{fig:encoding}.}
    \end{minipage}
\end{figure}

To analyze the entanglement costs of spreading and concentrating quantum information, we reduce these tasks to a particular type of state transformations.
Given any graph $G=(V,E)$ and any isometry $U$,
the state transformation equivalent to spreading quantum information over $G$ for $U$ is illustrated in \textbf{Figure~\ref{fig:encoding}}, and the state transformation equivalent to concentrating in \textbf{Figure~\ref{fig:decoding}}.
To define these equivalent state transformations, we consider a $D$-dimensional system $\mathcal{H}^R$ located at a party $R$ other than the $N$ parties $v_1,\ldots,v_N\in V$, where $\mathcal{H}^R$ is a reference system on which none of the $N$ parties can apply any operation.
Note that $D=\dim\mathcal{H}$.
We write a maximally entangled state of the Schmidt rank $D$ shared between $R$ and $v_1$ as
\begin{equation}
    \label{eq:data_maximally_entangled_state}
    \Ket{\Phi^+_D}=\frac{1}{\sqrt{D}}\sum_{l=0}^{D-1}\Ket{l}\otimes\Ket{l}\in\mathcal{H}^R\otimes\mathcal{H}.
\end{equation}
Moreover, we write a state obtained by performing $U$ on $\mathcal{H}$ for $\Ket{\Phi^+_D}$ as
\begin{equation}
    \label{eq:code_maximally_entangled_state}
    \begin{split}
        \ket{\tilde{\Phi}^+_D}\coloneqq\left(\mathbbm{1}^R\otimes U\right)\Ket{\Phi^+_D}
                              =\frac{1}{\sqrt{D}}\sum_{l=0}^{D-1}\Ket{l}\otimes\ket{\tilde{\psi}_l}\in\mathcal{H}^R\otimes\tilde{\mathcal{H}},
    \end{split}
\end{equation}
where $\mathbbm{1}^R$ is the identity operator on the system $\mathcal{H}^R$.
The equivalence between the two tasks is stated as follows,
which is a straightforward generalization of a technique referred to in Reference~\cite{P3} as the relative state method.

\begin{proposition}
\label{lem:encoding_state_transformation}
    \textit{State transformations equivalent to spreading and concentrating quantum information.}
    Spreading quantum information over a given graph $G=(V,E)$ for a given isometry $U$ defined as Equation~\eqref{eq:encoding} is achievable if and only if there exists an LOCC map $\mathcal{S}$ by the $N$ parties assisted by the initial resource state $\bigotimes_{e\in E}\Ket{\Phi_{M_e}^+}^e$ such that
    \begin{equation}
        \label{eq:encoding_state_transformation}
        \begin{split}
            \id^R\otimes\mathcal{S}\left(\Ket{\Phi^+_D}\Bra{\Phi^+_D}\otimes\bigotimes_{e\in E}\Ket{\Phi_{M_e}^+}\Bra{\Phi_{M_e}^+}\right)
            =\ket{\tilde\Phi^+_D}\bra{\tilde\Phi^+_D},
        \end{split}
    \end{equation}
    where $\id^R$ is the identity map on $\mathcal{H}^R$, and the states $\Ket{\Phi^+_D}$ and $\ket{\tilde\Phi^+_D}$ are defined as Equations~\eqref{eq:data_maximally_entangled_state} and~\eqref{eq:code_maximally_entangled_state}, respectively.
    Concentrating quantum information over $G=(V,E)$ for $U$ defined as Equation~\eqref{eq:decoding} is achievable if and only if there exists an LOCC map $\mathcal{C}$ by the $N$ parties assisted by $\bigotimes_{e\in E}\Ket{\Phi_{M_e}^+}^e$ such that
    \begin{equation}
        \label{eq:decoding_state_transformation}
        \begin{split}
            \id^R\otimes\mathcal{C}\left(\ket{\tilde\Phi^+_D}\bra{\tilde\Phi^+_D}\otimes\bigotimes_{e\in E}\Ket{\Phi_{M_e}^+}\Bra{\Phi_{M_e}^+}\right)
            =\Ket{\Phi^+_D}\Bra{\Phi^+_D},
        \end{split}
    \end{equation}
    where the notations are the same as those in Equation~\eqref{eq:encoding_state_transformation}.
\end{proposition}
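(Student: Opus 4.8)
The plan is to prove both equivalences by the \emph{relative-state method}, which here is nothing but the Choi--Jamio{\l}kowski correspondence between a completely positive trace-preserving (CPTP) map and its action on one half of a maximally entangled state, applied to the effective channel $\rho\mapsto\mathcal{S}(\rho\otimes\bigotimes_{e\in E}\Ket{\Phi_{M_e}^+}\Bra{\Phi_{M_e}^+})$ and its analogue for $\mathcal{C}$. I spell out the case of spreading quantum information; concentrating follows by the identical argument with $U$ replaced by $U^\dag$ and the roles of $\Ket{\Phi^+_D}$ and $\ket{\tilde\Phi^+_D}$ interchanged. Throughout, the structural fact I use is that the LOCC map $\mathcal{S}$ (and likewise $\mathcal{C}$) acts only on the systems held by $v_1,\ldots,v_N$ and does nothing to the reference $\mathcal{H}^R$, so $\id^R\otimes\mathcal{S}$ is well defined and commutes with any operation on $\mathcal{H}^R$; in particular the very same LOCC map witnesses both sides of the equivalence, so there is nothing new to construct.

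The ``only if'' direction is immediate: if $\mathcal{S}$ satisfies \eqref{eq:encoding} for every $\rho\in\mathcal{D}(\mathcal{H})$, I apply $\id^R\otimes\mathcal{S}$ with $v_1$'s input taken to be the $\mathcal{H}$-part of $\Ket{\Phi^+_D}$, and $(\mathbbm{1}^R\otimes U)\Ket{\Phi^+_D}=\ket{\tilde\Phi^+_D}$ from \eqref{eq:code_maximally_entangled_state} turns \eqref{eq:encoding} into \eqref{eq:encoding_state_transformation}. For the ``if'' direction, assume \eqref{eq:encoding_state_transformation} and fix an arbitrary pure state $\ket{\psi}=\sum_l c_l\Ket{l}\in\mathcal{H}$; setting $\ket{\psi^*}=\sum_l c_l^*\Ket{l}\in\mathcal{H}^R$, one checks directly from \eqref{eq:data_maximally_entangled_state} and \eqref{eq:code_maximally_entangled_state} the relative-state identities $(\Bra{\psi^*}^R\otimes\mathbbm{1})\Ket{\Phi^+_D}=\tfrac{1}{\sqrt{D}}\ket{\psi}$ and $(\Bra{\psi^*}^R\otimes\mathbbm{1})\ket{\tilde\Phi^+_D}=\tfrac{1}{\sqrt{D}}U\ket{\psi}$. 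Sandwiching both sides of \eqref{eq:encoding_state_transformation} with $\Bra{\psi^*}^R$ and $\ket{\psi^*}^R$ on the reference, moving these $\mathcal{H}^R$-operators through $\id^R\otimes\mathcal{S}$ (legitimate since they act on a system disjoint from those touched by $\mathcal{S}$), and clearing the common factor $\tfrac{1}{D}$, I obtain $\mathcal{S}(\ket{\psi}\bra{\psi}\otimes\bigotimes_{e\in E}\Ket{\Phi_{M_e}^+}\Bra{\Phi_{M_e}^+})=U\ket{\psi}\bra{\psi}U^\dag$. Since every $\rho\in\mathcal{D}(\mathcal{H})$ is a convex combination of such rank-one projectors and $\rho\mapsto\mathcal{S}(\rho\otimes\bigotimes_{e\in E}\Ket{\Phi_{M_e}^+}\Bra{\Phi_{M_e}^+})$ is linear, \eqref{eq:encoding} follows for all $\rho$, i.e.\ spreading quantum information over $G$ for $U$.

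The concentrating case is completely parallel: because $U$ is an isometry onto $\tilde{\mathcal{H}}$ with $\dim\tilde{\mathcal{H}}=D$, the inputs $\{U\rho U^\dag:\rho\in\mathcal{D}(\mathcal{H})\}$ exhaust $\mathcal{D}(\tilde{\mathcal{H}})$, the target transformation $U\rho U^\dag\mapsto\rho$ is the CPTP map $\sigma\mapsto U^\dag\sigma U$, and $\ket{\tilde\Phi^+_D}\bra{\tilde\Phi^+_D}$ is the maximally entangled state sent through it; equating the $\mathcal{H}^R$-referenced images exactly as above gives the equivalence of \eqref{eq:decoding} and \eqref{eq:decoding_state_transformation}. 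I do not anticipate a genuine obstacle here; the only points requiring care are (i) that operators on $\mathcal{H}^R$ genuinely commute through the LOCC maps, which holds because $\mathcal{S}$ and $\mathcal{C}$ act on systems disjoint from $\mathcal{H}^R$, and (ii) the normalization bookkeeping together with the convexity/linearity step that upgrades the identity from pure to arbitrary mixed inputs --- both routine. The proposition is thus the natural network-level packaging of the single-party relative-state argument of Reference~\cite{P3}.
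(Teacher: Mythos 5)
Your proposal is correct and follows essentially the same route as the paper: both directions are the relative-state (Choi--Jamio{\l}kowski) argument, with the forward direction obtained by feeding half of $\Ket{\Phi^+_D}$ into the effective channel and the converse by extracting the channel's action from Equation~\eqref{eq:encoding_state_transformation} and extending by linearity. The only cosmetic difference is that the paper recovers the action on the matrix units $\Ket{l}\Bra{l'}$ via linear independence of the reference-side operators, whereas you recover the action on rank-one projectors by sandwiching with the conjugated relative states $\ket{\psi^*}^R$; both are instances of the same method and both correctly handle the commutation through $\id^R\otimes\mathcal{S}$ and the linearity step.
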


\begin{proof}
   We prove the statement on spreading quantum information, while the statement on concentrating quantum information also follows from the same argument by substituting $\rho$, $U\rho U^\dag$, $\Ket{l}\Bra{l'}$, $\ket{\tilde{\psi}_l}\bra{\tilde{\psi}_{l'}}$, and $\mathcal{S}$ in the following with $U\rho U^\dag$, $\rho$, $\ket{\tilde{\psi}_l}\bra{\tilde{\psi}_{l'}}$, $\Ket{l}\Bra{l'}$, and $\mathcal{C}$, respectively.

    \textit{If part}:
    If there exists an LOCC map $\mathcal{S}$ defined as Equation~\eqref{eq:encoding} for any input state $\rho$,
    Equation~\eqref{eq:encoding_state_transformation} holds as a special case of Equation~\eqref{eq:encoding} in which the input state $\rho$ is a completely mixed state.

    \textit{Only if part}:
    Assume that there exists an LOCC map $\mathcal{S}$ defined as Equation~\eqref{eq:encoding_state_transformation}.
    Due to the linearity of the map $\mathcal{S}$, Equation~\eqref{eq:encoding_state_transformation} yields
    \begin{equation}
        \begin{split}
            \frac{1}{D}\sum_{l,l'=0}^{D-1}\Ket{l}\Bra{l'}\otimes\mathcal{S}\left(\Ket{l}\Bra{l'}\otimes\bigotimes_{e\in E}\Ket{\Phi_{M_e}^+}\Bra{\Phi_{M_e}^+}\right)
            =\frac{1}{D}\sum_{l,l'=0}^{D-1}\Ket{l}\Bra{l'}\otimes\ket{\tilde{\psi}_l}\bra{\tilde{\psi}_{l'}}.
        \end{split}
    \end{equation}
    Since the set ${\left\{\Ket{l}\Bra{l'}\right\}}_{l,l'}$ of operators on the system $\mathcal{H}^R$ is linearly independent, we obtain
    \begin{equation}
        \mathcal{S}\left(\Ket{l}\Bra{l'}\otimes\bigotimes_{e\in E}\Ket{\Phi_{M_e}^+}\Bra{\Phi_{M_e}^+}\right) = \ket{\tilde{\psi}_l}\bra{\tilde{\psi}_{l'}},
    \end{equation}
    for each $l,l'\in\left\{0,\ldots,D-1\right\}$.
    Therefore, writing any operators $\rho\in\mathcal{D}\left(\mathcal{H}\right)$ and $U\rho U^\dag\in\mathcal{D}\left(\tilde{\mathcal{H}}\right)$ as
    \begin{equation}
        \rho=\sum_{l,l'=0}^{D-1}c_{l,l'}\Ket{l}\Bra{l'},\quad U\rho U^\dag=\sum_{l,l'=0}^{D-1}c_{l,l'}\ket{\tilde{\psi}_l}\bra{\tilde{\psi}_{l'}},
    \end{equation}
    we conclude Equation~\eqref{eq:encoding} as follows:
    \begin{equation}
        \begin{split}
            \mathcal{S}\left(\rho\otimes\bigotimes_{e\in E}\Ket{\Phi_{M_e}^+}\Bra{\Phi_{M_e}^+}\right)
            =\sum_{l,l'=0}^{D-1}c_{l,l'}\mathcal{S}\left(\Ket{l}\Bra{l'}\otimes\bigotimes_{e\in E}\Ket{\Phi_{M_e}^+}\Bra{\Phi_{M_e}^+}\right)
            =\sum_{l,l'=0}^{D-1}c_{l,l'}\ket{\tilde{\psi}_l}\bra{\tilde{\psi}_{l'}}
            =U\rho U^\dag.
        \end{split}
    \end{equation}
\end{proof}

\subsection{\label{sec:merge_split_summary}Exact state splitting and merging}
We summarize the results on exact state splitting and merging presented in Reference~\cite{Y11} for later use in our analysis.
Consider three parties $A$, $B$, and $R^\prime$.
The parties $A$ and $B$ may freely perform LOCC but cannot perform any operation on the system of the party $R^\prime$.

Regarding exact state splitting, Reference~\cite{Y11} provides an explicit construction of an algorithm for exact state splitting, which yields the following lemma.
\begin{lemma}
\label{lem:split}
(Reference~\cite{Y11})
\textit{Exact state splitting.}
Given any pure state $\Ket{\psi}^{R^\prime AA^\prime}\in\mathcal{H}^{R^\prime}\otimes\mathcal{H}^A\otimes\mathcal{H}^{A^\prime}$, where $\mathcal{H}^{R^\prime}$ is the Hilbert space of $R^\prime$, and $\mathcal{H}^A\otimes\mathcal{H}^{A^\prime}$ of $A$,
there exists an LOCC map performed by $A$ and $B$ transforming
$\Ket{\psi}^{R^\prime AA^\prime}\otimes\Ket{\Phi_K^+}$ to $\Ket{\psi}^{R^\prime AB}$,
where $\Ket{\psi}^{R^\prime AB}\in\mathcal{H}^{R^\prime}\otimes\mathcal{H}^A\otimes\mathcal{H}^B$, $\mathcal{H}^B$ is the Hilbert space of $B$, and $\Ket{\Phi_K^+}$ is a maximally entangled state shared between $A$ and $B$ satisfying
\begin{equation}
    \log_2 K = \log_2\rank\tr_{R^\prime A}\Ket{\psi}\Bra{\psi}^{R^\prime AA^\prime}.
\end{equation}
\end{lemma}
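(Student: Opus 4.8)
The plan is to give an explicit LOCC protocol realizing the claimed transformation, using the maximally entangled resource of Schmidt rank $K$ with $\log_2 K=\log_2\rank\tr_{R'A}\Ket{\psi}\Bra{\psi}^{R'AA'}$. First I would fix a Schmidt decomposition of the given pure state with respect to the bipartition $R'A \,|\, A'$, writing $\Ket{\psi}^{R'AA'}=\sum_{j=0}^{K-1}\sqrt{\lambda_j}\,\Ket{\xi_j}^{R'A}\otimes\Ket{\eta_j}^{A'}$, where $K=\rank\tr_{R'A}\Ket{\psi}\Bra{\psi}$ is precisely the number of nonzero Schmidt coefficients, so the $\Ket{\eta_j}^{A'}$ form an orthonormal set in $\mathcal{H}^{A'}$. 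The target state $\Ket{\psi}^{R'AB}$ is the ``same'' state with the $A'$-register relabelled as a $B$-register via a fixed isometry $\mathcal{H}^{A'}\hookrightarrow\mathcal{H}^{B}$ sending $\Ket{\eta_j}^{A'}\mapsto\Ket{j}^{B}$; equivalently $\Ket{\psi}^{R'AB}=\sum_j\sqrt{\lambda_j}\,\Ket{\xi_j}^{R'A}\otimes\Ket{j}^{B}$, which is what needs to be produced at $B$'s side.

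The protocol itself is the standard teleportation-of-the-purifying-register idea, restricted to the relevant $K$-dimensional subspace. Concretely: $A$ holds the $A'$-part of $\Ket{\psi}$ together with her half of $\Ket{\Phi_K^+}^{AB}$. She performs a generalized Bell measurement on $\mathcal{H}^{A'}$ (projected onto $\mathrm{span}\{\Ket{\eta_j}\}$, identified with $\mathbb{C}^K$) and her half of $\Ket{\Phi_K^+}$, obtaining outcome labelled by a pair $(x,z)\in\mathbb{Z}_K\times\mathbb{Z}_K$, and sends $(x,z)$ to $B$ by classical communication. As in ordinary teleportation this transfers the $A'$-register to $B$ up to a known Weyl (shift-and-phase) operator $W_{x,z}$ on $\mathbb{C}^K$; $B$ applies $W_{x,z}^{\dagger}$ to correct it. Because the measurement and the correction act only on the $K$-dimensional space carrying the index $j$ and are independent of $R'$ and the rest of $A$, the $R'A$-parts $\Ket{\xi_j}^{R'A}$ and the amplitudes $\sqrt{\lambda_j}$ are untouched, and the joint state becomes exactly $\sum_j\sqrt{\lambda_j}\Ket{\xi_j}^{R'A}\otimes\Ket{j}^{B}=\Ket{\psi}^{R'AB}$, deterministically and with zero error. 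I would present this as a short sequence of displayed equations tracking the global state through (i) the initial tensor product, (ii) the post-measurement state conditioned on $(x,z)$, and (iii) the state after $B$'s correction, and then note that the outcome-averaged channel is the desired isometry on the relevant subspace.

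The only real subtlety — and the step I would be most careful about — is dimension bookkeeping: one must check that $K=\rank\tr_{R'A}\Ket{\psi}\Bra{\psi}^{R'AA'}$ is exactly the dimension needed, i.e. that no larger entangled resource is being smuggled in and that the support of the reduced state on $A'$ really is $K$-dimensional so the Bell measurement on a $K$-dimensional space suffices. This is immediate from the Schmidt decomposition above, since $\rank\tr_{R'A}\Ket{\psi}\Bra{\psi}=\rank\tr_{A'}\Ket{\psi}\Bra{\psi}$ equals the common Schmidt rank across the cut $R'A\,|\,A'$. A secondary point worth a sentence is that the classical communication direction ($A\to B$) and the locality of all operations (measurement at $A$, correction at $B$) confirm the map is genuine LOCC; everything else is the verification of the teleportation identity restricted to the $K$-dimensional block, which is routine. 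Since Reference~\cite{Y11} already contains the explicit construction, I would keep the exposition brief and cite it for the full details while giving enough of the argument to make the lemma self-contained here.
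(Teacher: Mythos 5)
Your proposal is correct and matches the construction the paper defers to Reference~\cite{Y11}: exact state splitting is achieved by compressing the $A'$ register onto the $K$-dimensional support of $\tr_{R'A}\Ket{\psi}\Bra{\psi}$ via a local isometry and then teleporting it with the rank-$K$ maximally entangled resource, which preserves all correlations with $R'$ and the retained part of $A$. The minor point about completing the Bell measurement on the full $\mathcal{H}^{A'}$ (or preceding it with the compression isometry) is handled adequately, so no gap remains.
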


For exact state merging, the following Koashi-Imoto decomposition~\cite{K3,H6,K5,W4} is a key technique for evaluating the entanglement cost.
\begin{lemma}
\label{lem:koashi_imoto_decomposition_tripartite}
(Lemma~11 in Reference~\cite{W4}, Reference~\cite{Y11})
\textit{Koashi-Imoto decomposition of a tripartite pure state.}
Given any pure state $\Ket{\psi}^{R^\prime AB}\in\mathcal{H}^{R^\prime}\otimes\mathcal{H}^A\otimes\mathcal{H}^B$,
there exists an algorithmic procedure to obtain a decomposition of $\mathcal{H}^A$ and $\mathcal{H}^B$
\begin{equation}
    \mathcal{H}^A=\bigoplus_{j=0}^{J-1}\mathcal{H}^{a_j^L}\otimes\mathcal{H}^{a_j^R},\quad
    \mathcal{H}^B=\bigoplus_{j=0}^{J-1}\mathcal{H}^{b_j^L}\otimes\mathcal{H}^{b_j^R},
\end{equation}
such that
$\Ket{\psi}^{R^\prime AB}$ is decomposed into
\begin{equation}
    \Ket{\psi}^{R^\prime AB}=\bigoplus_{j=0}^{J-1}\sqrt{p\left(j\right)}\Ket{\omega_j}^{a_j^L b_j^L}\otimes\Ket{\phi_j}^{R^\prime a_j^R b_j^R},
\end{equation}
where $p\left(j\right)$ is a probability distribution, and, for each $j\in\{0,\ldots,J-1\}$, $\Ket{\omega_j}^{a_j^L b_j^L}\in\mathcal{H}^{a_j^L}\otimes\mathcal{H}^{b_j^L}$ and $\Ket{\phi_j}^{R^\prime  a_j^R b_j^R}\in\mathcal{H}^{R^\prime}\otimes\mathcal{H}^{a_j^R}\otimes\mathcal{H}^{b_j^R}$.
\end{lemma}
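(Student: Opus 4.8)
The plan is to give a recursive algorithm acting directly on the pure state $\Ket{\psi}^{R^\prime AB}$ that repeatedly performs one of two elementary moves until neither applies, and then to check that the output has the stated form. The moves are: (i) a \emph{sector split}, $\Ket{\psi}=\sqrt{p^\prime}\,\Ket{\psi^\prime}+\sqrt{p^{\prime\prime}}\,\Ket{\psi^{\prime\prime}}$ with unit vectors $\Ket{\psi^\prime}\in\mathcal{H}^{R^\prime}\otimes\mathcal{H}^{A^\prime}\otimes\mathcal{H}^{B^\prime}$ and $\Ket{\psi^{\prime\prime}}\in\mathcal{H}^{R^\prime}\otimes\mathcal{H}^{A^{\prime\prime}}\otimes\mathcal{H}^{B^{\prime\prime}}$, for orthogonal decompositions $\mathcal{H}^A=\mathcal{H}^{A^\prime}\oplus\mathcal{H}^{A^{\prime\prime}}$ and $\mathcal{H}^B=\mathcal{H}^{B^\prime}\oplus\mathcal{H}^{B^{\prime\prime}}$ (so $\Ket{\psi^\prime},\Ket{\psi^{\prime\prime}}\neq 0$ forces all four summands nonzero); and (ii) a \emph{factor pull-out}, $\Ket{\psi}=\Ket{\omega}^{a^L b^L}\otimes\Ket{\psi_{\mathrm{red}}}^{R^\prime a^R b^R}$ for tensor factorizations $\mathcal{H}^A=\mathcal{H}^{a^L}\otimes\mathcal{H}^{a^R}$ and $\mathcal{H}^B=\mathcal{H}^{b^L}\otimes\mathcal{H}^{b^R}$ with $\Ket{\omega}^{a^L b^L}$ a pure state not involving $R^\prime$ and $\dim(\mathcal{H}^{a^L}\otimes\mathcal{H}^{b^L})\geq 2$ (a pure factor on one of $A$, $B$ alone being allowed). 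Whenever a move applies we recurse---on $\Ket{\psi^\prime}$ and $\Ket{\psi^{\prime\prime}}$ for (i), on $\Ket{\psi_{\mathrm{red}}}$ for (ii).

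Each move is detected and performed by standard linear algebra on $\Ket{\psi}$: diagonalizing the Hermitian operators built from the support of $\rho^{AB}\coloneqq\tr_{R^\prime}\Ket{\psi}\Bra{\psi}$ and reading off the resulting central and commutant decompositions of $\mathcal{H}^A$ and $\mathcal{H}^B$; this is the algorithmic content claimed in the statement. Both moves strictly decrease $\dim\mathcal{H}^A\cdot\dim\mathcal{H}^B$ (a sector split because each sector lives on proper subspaces of both $\mathcal{H}^A$ and $\mathcal{H}^B$, a pull-out because the extracted factor has dimension at least $2$), so the recursion halts, necessarily on instances to which neither move applies---the ``irreducible'' pure states---which one outputs with the trivial $J=1$ decomposition ($\mathcal{H}^{a_0^L}=\mathcal{H}^{b_0^L}=\mathbb{C}$, $\Ket{\omega_0}$ a scalar, $\Ket{\phi_0}=\Ket{\psi}$).

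Reassembling the recursion tree gives exactly the stated form: a pull-out tensors a common pure factor onto every $\Ket{\omega_j}$ produced in the subtree below it, keeping each pure; a sector split takes the union of the two resulting families of terms, mutually orthogonal on $\mathcal{H}^A$ and on $\mathcal{H}^B$, with the weights $p\left(j\right)$ equal to the products of the sector-split weights along the branches; and the $\Ket{\phi_j}$ are the irreducible leftovers, which---as $R^\prime$ is never split---all lie in the single space $\mathcal{H}^{R^\prime}$ and need not be mutually orthogonal there, matching the last assertion of the lemma. (Only existence is claimed; making the finest sector split and the maximal pull-out at each step yields the canonical, maximal Koashi--Imoto decomposition, but any run of the procedure already produces a valid one.)

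The step I expect to be the real obstacle is showing that (i) and (ii) exhaust the ways a pure $\Ket{\psi}^{R^\prime AB}$ can fail to be irreducible---so that ``no move applies'' genuinely forces the trivial form with $\Ket{\phi_0}=\Ket{\psi}$---together with spelling out the explicit linear algebra that recognizes and carries out each move. This is the structural core of the Koashi--Imoto theorem (the classification of the fixed-point structure of local operations that do not disturb the state), and once it is in hand the rest is bookkeeping over the recursion.
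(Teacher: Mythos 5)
First, a point of comparison that matters here: the paper does not prove Lemma~\ref{lem:koashi_imoto_decomposition_tripartite} at all. It is imported verbatim as Lemma~11 of Reference~\cite{W4} (see also References~\cite{Y11,K3,H6}), so there is no in-paper argument to measure your proposal against; what you have written is an attempt at the external theorem itself. Judged on those terms, your proposal is a sensible skeleton but does not yet contain the mathematics. As you note yourself, the statement as literally written is satisfied by the trivial $J=1$ decomposition with $\mathcal{H}^{a_0^L}=\mathcal{H}^{b_0^L}=\mathbb{C}$ and $\Ket{\phi_0}=\Ket{\psi}$, so the remark that ``any run of the procedure already produces a valid one'' proves only a tautology. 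The content of the lemma --- and the only thing the paper actually uses, through the cost $\max_j\bigl\{\log_2\lceil\lambda_0^{a_j^L}\dim\mathcal{H}^{a_j^R}\rceil\bigr\}$ in Lemma~\ref{lem:merge} and the monotonicity argument in the comparison theorem --- is that there is a \emph{canonical, finest} decomposition of this form, in which the $a_j^L$ factors carry exactly the part of $A$ decoupled from $R^\prime$ and the $a_j^R$ factors carry all of the correlation with $R^\prime$. Your proposal defers precisely this to the step you flag as ``the real obstacle'': showing that the sector split and the factor pull-out exhaust the ways $\Ket{\psi}$ can fail to be irreducible, and exhibiting the linear algebra that detects and maximizes each move. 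That step is not bookkeeping; it \emph{is} the Koashi--Imoto structure theorem.

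Second, even granting that step, your greedy recursion has an unaddressed order-dependence problem: to land on the canonical decomposition you must show that a common refinement of all decompositions of this form exists and that maximal moves applied in any order reach it. The standard proof avoids both issues by replacing the recursion with a single algebraic computation: consider the operators on $\mathcal{H}^A$ steered from the reference, $\tr_{R^\prime B}\left[\left(X^{R^\prime}\otimes\mathbbm{1}^{AB}\right)\Ket{\psi}\Bra{\psi}\right]$ with $X^{R^\prime}$ ranging over $\mathcal{B}(\mathcal{H}^{R^\prime})$, and take the finite-dimensional $C^*$-algebra they determine on the support of $\rho^A$. The Artin--Wedderburn structure theorem gives a block form $\bigoplus_j\mathbbm{1}^{a_j^L}\otimes\mathcal{B}\bigl(\mathcal{H}^{a_j^R}\bigr)$, which is the decomposition of $\mathcal{H}^A$; the matching decomposition of $\mathcal{H}^B$ and the product form $\Ket{\omega_j}\otimes\Ket{\phi_j}$ then follow block by block from uniqueness of purifications, since $B$ together with $R^\prime$ purifies $\rho^A$. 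Computing a generated algebra and its center is explicit, which is what makes the procedure ``algorithmic,'' and in this language your move (i) is the center of the algebra and your move (ii) is its commutant, so recasting your argument this way discharges both open points at once. Until one of these routes is actually carried out, the proposal identifies the right shape of the answer but leaves the theorem unproved.
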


Using the Koashi-Imoto decomposition, an explicit construction of an algorithm for exact state merging can be obtained, which yields the following lemma.
This algorithm consists of $A$'s projective measurement ${\left\{\Ket{m}^A\right\}}_m$ and $B$'s isometry $U_m^B$ conditioned by $m$, and their explicit forms are given in Reference~\cite{Y11}.
\begin{lemma}
\label{lem:merge}
(Reference~\cite{Y11})
\textit{Exact state merging.}
Given any pure state $\Ket{\psi}^{R^\prime AB}\in\mathcal{H}^{R^\prime}\otimes\mathcal{H}^A\otimes\mathcal{H}^B$, where $\mathcal{H}^{R^\prime}$, $\mathcal{H}^A$, and $\mathcal{H}^B$ are the Hilbert spaces of $R^\prime$, $A$, and $B$, respectively,
there exists an LOCC map performed by $A$ and $B$ achieving
\begin{equation}
    \label{eq:locc_merge}
    \left(\mathbbm{1}^{R^\prime}\otimes\Bra{m}^A\otimes U_m^B\right)\left(\Ket{\psi}^{R^\prime AB}\otimes\Ket{\Phi_K^+}\right)=\Ket{\psi}^{R^\prime B^\prime B},
\end{equation}
where $\mathbbm{1}^{R^\prime}$ is the identity operator on the system $\mathcal{H}^{R^\prime}$, ${\left\{\Ket{m}^A\right\}}_m$ represents a projective measurement by $A$ with the outcome $m$, $U_m^B$ represents an isometry by $B$ conditioned by $m$, $\Ket{\psi}^{R^\prime B^\prime B}\in\mathcal{H}^{R^\prime}\otimes\mathcal{H}^{B^\prime}\otimes\mathcal{H}^B$, $\mathcal{H}^{B^\prime}$ is $B$'s system satisfying $\dim\mathcal{H}^A=\dim\mathcal{H}^{B^\prime}$, and $\Ket{\Phi_K^+}$ is a maximally entangled state shared between $A$ and $B$ satisfying
\begin{equation}
        \log_2 K = \max_{j}\left\{\log_2\left\lceil\lambda_0^{a_j^L}\dim\mathcal{H}^{a_j^R}\right\rceil\right\},
\end{equation}
where $j\in\{0,\ldots,J-1\}$, $\lambda^{a_j^L}_0$ is the largest eigenvalue of the reduced state $\omega_j^{a_j^L}\coloneqq\tr_{b_j^L}\Ket{\omega_j}\Bra{\omega_j}^{a_j^L b_j^L}\in\mathcal{D}\left(\mathcal{H}^{a_j^L}\right)$ of $\Ket{\omega_j}^{a_j^L b_j^L}$, and $\lceil{}\cdots{}\rceil$ is the ceiling function.
\end{lemma}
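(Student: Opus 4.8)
The plan is to give a constructive proof that follows the Koashi–Imoto decomposition (Lemma~\ref{lem:koashi_imoto_decomposition_tripartite}) and then checks that a natural LOCC protocol built from it achieves the transformation \eqref{eq:locc_merge} with the stated value of $\log_2 K$. First I would apply Lemma~\ref{lem:koashi_imoto_decomposition_tripartite} to $\Ket{\psi}^{R^\prime AB}$ to obtain the direct-sum decomposition $\mathcal{H}^A=\bigoplus_j\mathcal{H}^{a_j^L}\otimes\mathcal{H}^{a_j^R}$, $\mathcal{H}^B=\bigoplus_j\mathcal{H}^{b_j^L}\otimes\mathcal{H}^{b_j^R}$ and the state $\Ket{\psi}^{R^\prime AB}=\bigoplus_j\sqrt{p(j)}\,\Ket{\omega_j}^{a_j^L b_j^L}\otimes\Ket{\phi_j}^{R^\prime a_j^R b_j^R}$. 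The key structural observation is that $A$ must send to $B$ the ``correlated'' part $a_j^L$ together with the ``$R^\prime$-entangled'' part $a_j^R$, but the value of $j$ (which $B$ can learn by a local measurement on $b_j^L\,b_j^R$) need not be transmitted, and within each block the $a_j^L$ register need only be moved with enough fidelity to preserve the Schmidt structure of $\Ket{\omega_j}$.

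The core of the protocol, as in Reference~\cite{Y11}, is: (i) $B$ performs the projective measurement identifying the block index $j$ (this is possible because the blocks are mutually orthogonal on $B$'s side and causes no disturbance to the global state conditioned on $j$); (ii) within block $j$, one uses a standard entanglement-assisted ``coherent state transfer'' of the register $a_j^L\otimes a_j^R$ from $A$ to $B$, together with $A$'s projective measurement $\{\Ket{m}^A\}_m$ and $B$'s correction isometry $U_m^B$. The cost of step (ii) is dictated by how much entanglement is needed to teleport/merge register $a_j^R$ (dimension $\dim\mathcal{H}^{a_j^R}$, fully consumed since it is entangled with $R^\prime$) plus the register $a_j^L$, where $\Ket{\omega_j}^{a_j^L b_j^L}$ is a pure state shared between $A$ and $B$ already — so only the ``missing'' entanglement to bring $\omega_j^{a_j^L}$ up to a maximally entangled state of rank $\dim\mathcal{H}^{a_j^L}$ is required; this missing amount is governed by $\lambda_0^{a_j^L}$, the largest Schmidt coefficient squared. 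A careful bookkeeping of the combined register, together with the fact that the entanglement resource $\Ket{\Phi_K^+}$ must work simultaneously for all outcomes $j$ (hence the $\max_j$), yields $\log_2 K=\max_j\{\log_2\lceil\lambda_0^{a_j^L}\dim\mathcal{H}^{a_j^R}\rceil\}$; the ceiling appears because $K$ must be an integer Schmidt rank. I would then verify by direct computation that the composed map sends $\Ket{\psi}^{R^\prime AB}\otimes\Ket{\Phi_K^+}$ to $\Ket{\psi}^{R^\prime B^\prime B}$ with $\dim\mathcal{H}^{B^\prime}=\dim\mathcal{H}^A$, checking that the $R^\prime$-correlations are preserved exactly and that the post-measurement corrections $U_m^B$ indeed undo the byproduct operators from $A$'s measurement.

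The main obstacle will be step (ii): showing that a \emph{single} maximally entangled state of rank $K$ — rather than a separate resource tailored to each block $j$ — suffices to carry out all the block-conditioned transfers coherently, without the measurement of $j$ or the byproduct measurement $\{\Ket{m}^A\}_m$ leaking which-block information into $R^\prime$ or destroying the superposition across blocks. Concretely, one must embed each block's transfer isometry into a common $K$-dimensional entangled resource and argue that the ``padding'' to the larger dimension (the ceiling and the $\max_j$) is harmless. This is precisely where the explicit construction in Reference~\cite{Y11} — the specific forms of $\{\Ket{m}^A\}_m$ and $U_m^B$ — does the work, so in the proof I would cite that construction for the detailed operator identities and restrict my own argument to (a) invoking Lemma~\ref{lem:koashi_imoto_decomposition_tripartite}, (b) identifying why the entanglement cost reduces to the per-block quantity $\lceil\lambda_0^{a_j^L}\dim\mathcal{H}^{a_j^R}\rceil$, and (c) taking the maximum over $j$. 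The remaining verifications — normalization, exactness of the output state, and the dimension count $\dim\mathcal{H}^{B^\prime}=\dim\mathcal{H}^A$ — are routine and I would state them without grinding through the algebra.
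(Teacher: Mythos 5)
First, a point of comparison: the paper does not prove this lemma at all --- it is imported verbatim from Reference~\cite{Y11}, and the paper only records that the protocol consists of $A$'s projective measurement ${\{\Ket{m}^A\}}_m$ and $B$'s conditional isometry $U_m^B$ whose explicit forms are given there. So a citation is all that is expected, and your plan ultimately also defers the operator identities to Reference~\cite{Y11}. Taken as a proof sketch, however, your protocol has a genuine structural flaw. In step~(i) you have $B$ perform a projective measurement of the Koashi--Imoto block index $j$. The target state $\Ket{\psi}^{R^\prime B^\prime B}=\bigoplus_j\sqrt{p(j)}\Ket{\omega_j}\otimes\Ket{\phi_j}$ is a coherent superposition over $j$, and the states $\phi_j^{R^\prime}$ are in general distinct; once $B$ collapses onto a single block, no subsequent LOCC by $A$ and $B$ can restore the off-block coherence (that would require acting on $R^\prime$), so exactness fails whenever $J>1$. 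This is precisely the difficulty you flag later as ``the main obstacle,'' yet your step~(i) commits to the move that creates it. The actual construction keeps $j$ coherent throughout: it is $A$'s measurement basis $\{\Ket{m}^A\}$ that is spread across all blocks so that no outcome reveals $j$, and the existence of such a basis is exactly what forces $K\geqq\max_j\lceil\lambda_0^{a_j^L}\dim\mathcal{H}^{a_j^R}\rceil$. (You can see this in the paper's own Application~1: the party $v_4$ measures in $\{\Ket{\pm}\}$, a basis unbiased with respect to the two Koashi--Imoto blocks, and the post-measurement state retains its direct-sum structure.)

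Second, your bookkeeping does not actually produce the stated cost. You describe $\dim\mathcal{H}^{a_j^R}$ as ``fully consumed'' and $\lambda_0^{a_j^L}$ as governing an \emph{additional} amount of entanglement needed to top up $\Ket{\omega_j}$ to a maximally entangled state on $a_j^L$; that reasoning yields a quantity built from $\lceil\lambda_0^{a_j^L}\dim\mathcal{H}^{a_j^L}\rceil$ combined with $\dim\mathcal{H}^{a_j^R}$, not $\lceil\lambda_0^{a_j^L}\dim\mathcal{H}^{a_j^R}\rceil$. Since $\lambda_0^{a_j^L}\leqq 1$, the correct reading is the opposite: the entanglement already present in $\Ket{\omega_j}^{a_j^L b_j^L}$ is recycled to give a \emph{multiplicative discount} (by $\lambda_0^{a_j^L}$) on the cost of transferring the $R^\prime$-correlated register $a_j^R$, while $a_j^L$ itself is moved at no extra charge. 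To close the gap you would need to reproduce the explicit basis construction of Reference~\cite{Y11} (or at least the majorization-type existence condition behind it), rather than the teleport-plus-top-up accounting you outline.
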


\section{\label{sec:encoding}Entanglement cost of spreading quantum information}
We derive the optimal entanglement cost of spreading quantum information over any tree for any isometry.
To derive the entanglement cost, we generalize the two-party algorithm for exact state splitting and provide the optimal algorithm for spreading quantum information over any tree-topology network connecting multiple parties.

We evaluate the entanglement cost of spreading quantum information, using the following notations.
Given any tree $T=(V,E)$,
we let $\tilde{\Phi}_{D,e}^{+}$ for each $e=\{p\left(v_k\right),v_k\}\in E$ denote the reduced state for $\ket{\tilde{\Phi}_D^+}$ on the system $\bigotimes_{v\in D'_{v_k}}\tilde{\mathcal{H}}^{v}$ shared among $v_k$ itself and the descendants of $v_k$, that is,
\begin{equation}
    \label{eq:encoding_reduced_state}
    \tilde{\Phi}_{D,e}^{+}\coloneqq\tr_{R\overline{D'_{v_k}}}\ket{\tilde{\Phi}_D^+}\Bra{\tilde{\Phi}_D^+},
\end{equation}
where $\overline{D'_{v_k}}=V\setminus D'_{v_k}$ and $\tr_{R\overline{D'_{v_k}}}$ is the partial trace of the system $\mathcal{H}^R\otimes\bigotimes_{v\in\overline{D'_{v_k}}}\tilde{\mathcal{H}}^{v}$.
We provide an optimal algorithm for the state transformation defined as Equation~\eqref{eq:decoding_state_transformation} in Proposition~\ref{lem:encoding_state_transformation} equivalent to spreading quantum information,
which yields the following theorem.
\begin{theorem}
\label{thm:spreading}
    \textit{Entanglement cost of spreading quantum information over trees.}
    Given any tree $T=(V,E)$ and any isometry $U$,
    spreading quantum information over $T$ for $U$ is achievable if and only if, for each $e\in E$,
    \begin{equation}
        \label{eq:encoding_cost}
        \log_2 M_e \geqq \log_2\rank\tilde{\Phi}_{D,e}^{+},
    \end{equation}
    where $\tilde{\Phi}_{D,e}^{+}$ is defined as Equation~\eqref{eq:encoding_reduced_state}.
\end{theorem}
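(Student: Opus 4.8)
The plan is to invoke Proposition~\ref{lem:encoding_state_transformation} to reduce the claim to a statement about the deterministic exact LOCC transformation $\Ket{\Phi^+_D}\Bra{\Phi^+_D}\otimes\bigotimes_{e\in E}\Ket{\Phi_{M_e}^+}\Bra{\Phi_{M_e}^+}\to\ket{\tilde\Phi^+_D}\bra{\tilde\Phi^+_D}$ of Equation~\eqref{eq:encoding_state_transformation}, whose input and output are both pure once the reference $R$ is appended. I would prove the ``only if'' direction (optimality) by a Schmidt-rank argument across the bipartition induced by each edge, and the ``if'' direction (achievability) by an explicit protocol that routes the code subsystems outward from the root $v_1$ by repeated use of exact state splitting (Lemma~\ref{lem:split}).

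For the converse, fix an edge $e=\{p(v_k),v_k\}\in E$. Since $T$ is a tree, deleting $e$ disconnects it into exactly the two components $D'_{v_k}$ and $\overline{D'_{v_k}}$. Group the parties into $\mathcal A\coloneqq\{R\}\cup\overline{D'_{v_k}}$ and $\mathcal B\coloneqq D'_{v_k}$. Across the cut $\mathcal A\,|\,\mathcal B$, the system $\mathcal H^R$ and $\Ket{\Phi^+_D}$ lie entirely on the $\mathcal A$ side (as $v_1\in\overline{D'_{v_k}}$), each resource state $\Ket{\Phi^+_{M_{e'}}}$ with $e'\neq e$ lies within $\mathcal A$ or within $\mathcal B$, and only $\Ket{\Phi^+_{M_e}}$ straddles the cut; hence the input pure state has Schmidt rank $M_e$ across $\mathcal A\,|\,\mathcal B$. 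The output $\ket{\tilde\Phi^+_D}$ has Schmidt rank $\rank\tilde\Phi^+_{D,e}$ across the same cut, by the definition~\eqref{eq:encoding_reduced_state} of $\tilde\Phi^+_{D,e}$. Regarding the whole protocol as LOCC between the two super-parties $\mathcal A$ and $\mathcal B$, and using that a deterministic exact LOCC transformation of a pure state into a pure state cannot increase the Schmidt rank (each classical-transcript branch applies a product of local Kraus operators, which cannot raise the Schmidt rank, and the deterministic pure output must coincide with every such branch), we obtain $\rank\tilde\Phi^+_{D,e}\leqq M_e$, which is Equation~\eqref{eq:encoding_cost}.

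For achievability, suppose Equation~\eqref{eq:encoding_cost} holds for every edge. First $v_1$ applies the isometry $U$ locally --- legitimate because all subsystems $\tilde{\mathcal H}^{v}$ start out in $v_1$'s laboratory --- producing $\ket{\tilde\Phi^+_D}$ with every $\tilde{\mathcal H}^v$ held by $v_1$. Then process the edges in an order consistent with the rooted-tree partial order, i.e.\ the edge $\{p(v_k),v_k\}$ is treated after the edge $\{p(p(v_k)),p(v_k)\}$; by the ascending labeling this is simply the order of increasing $k$. When $e=\{p(v_k),v_k\}$ is processed, the bundle $\bigotimes_{v\in D'_{v_k}}\tilde{\mathcal H}^v$ resides at $p(v_k)$: it was delivered there when $\{p(p(v_k)),p(v_k)\}$ was processed (or it was there from the start if $p(v_k)=v_1$), and it cannot have moved since, as only the disjoint descendant-bundles of $p(v_k)$'s other children could have been shipped away. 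Apply Lemma~\ref{lem:split} with $A=p(v_k)$, $B=v_k$, $\mathcal H^{A'}=\bigotimes_{v\in D'_{v_k}}\tilde{\mathcal H}^v$, and with every system held by a party other than $p(v_k)$ and $v_k$ absorbed into the inaccessible reference $R'$ for this step; this relocates the whole bundle to $v_k$. Because state splitting leaves the global pure state invariant and only relocates $\mathcal H^{A'}$, at this moment the global state is still $\ket{\tilde\Phi^+_D}$, so the reduced state on $\mathcal H^{A'}$ equals $\tilde\Phi^+_{D,e}$ and the splitting consumes a maximally entangled state of Schmidt rank $\rank\tilde\Phi^+_{D,e}\leqq M_e$ --- which $\Ket{\Phi^+_{M_e}}$ can supply, since $\Ket{\Phi^+_{M_e}}$ can be converted to a maximally entangled state of any Schmidt rank at most $M_e$ by deterministic LOCC. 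After all edges are processed each $\tilde{\mathcal H}^{v_k}$ sits at $v_k$, so the global state is $\ket{\tilde\Phi^+_D}$ with the correct distribution, i.e.\ Equation~\eqref{eq:encoding_state_transformation} holds; Proposition~\ref{lem:encoding_state_transformation} then yields spreading of quantum information.

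The main obstacle is the bookkeeping in the achievability part: one must show, by induction along the processing order, that at each step the claimed bundle really is located at $p(v_k)$ and that the global state fed into the splitting step is exactly $\ket{\tilde\Phi^+_D}$, so that the input to Lemma~\ref{lem:split} has reduced state $\tilde\Phi^+_{D,e}$ and the consumed entanglement matches Equation~\eqref{eq:encoding_cost}. One should also verify that a ``parent before child'' ordering exists for every tree and that siblings do not interfere (their descendant bundles being disjoint), and dispose of the slack when $M_e$ strictly exceeds $\rank\tilde\Phi^+_{D,e}$. By contrast, once the tree cut is set up the converse is routine.
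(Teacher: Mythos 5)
Your proposal is correct and follows essentially the same route as the paper: the converse is the identical Schmidt-rank monotonicity argument across the bipartition induced by deleting each edge, and achievability is the same sequential application of exact state splitting (Lemma~\ref{lem:split}) outward from the root after $v_1$ locally applies $U$, with the per-edge cost $\log_2\rank\tilde{\Phi}_{D,e}^{+}$ read off from the reduced state of the bundle being relocated. The only cosmetic difference is that you index the protocol by edges in parent-before-child order while the paper indexes it by parties in the ascending labeling, and you make explicit the (correct) handling of slack when $M_e$ strictly exceeds the rank.
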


\begin{proof}
    \textit{If part}:
    Given any tree $T = (V,E)$ with an ascending labeling and any isometry $U$,
    we construct an algorithm for the state transformation defined as Equation~\eqref{eq:encoding_state_transformation} in Proposition~\ref{lem:encoding_state_transformation} by applying exact state splitting in Lemma~\ref{lem:split} sequentially starting from the root $v_1\in V$,
    and we prove that this algorithm achieves the equality in~\eqref{eq:encoding_cost} for each $e\in E$.
    In this algorithm, the root party $v_1$ first locally applies the given isometry $U$ to $\Ket{\Phi_D^+}\in\mathcal{H}^R\otimes\mathcal{H}$ on $\mathcal{H}$ to obtain $\ket{\tilde{\Phi}_D^+}$, where $\tilde{\mathcal{H}}$ is located at $v_1$ at this moment. Then, the parties perform the following sub-algorithm using the exact state splitting sequentially in order $v_1,v_2,\ldots,v_N$ to spread the state $\tilde{\mathcal{H}}$.
    After all the parties performing the sub-algorithm, spreading quantum information is achieved.

    We describe the sub-algorithm for each party $v_k\in V$.
    At the beginning of $v_k$'s sub-algorithm, we assume that the party $v_k$ holds the reduced state of $\ket{\tilde{\Phi}_D^+}$ on $\bigotimes_{v\in D'_{v_k}}\tilde{\mathcal{H}}^v$, that is, the system for the parties corresponding to $v_k$ itself and $v_k$'s descendants.
    Note that this assumption is satisfied because of an ascending labeling.
    If $v_k$ has no child, $v_k$'s sub-algorithm terminates.
    Otherwise, for each child $c\in C_{v_k}$,
    party $v_k$ performs the exact state splitting in Lemma~\ref{lem:split}, where $v_k$ and $c$ in the sub-algorithm are regarded as $A$ and $B$ in Lemma~\ref{lem:split}, and the subsystem $\bigotimes_{v\in D'_c}\tilde{\mathcal{H}}^v$, the other subsystems of party $v_k$, and all the rest of the system of the parties other than $v_k$ in the sub-algorithm are regarded as $\mathcal{H}^{A^\prime}$, $\mathcal{H}^A$, and $\mathcal{H}^{R^\prime}$ in Lemma~\ref{lem:split}, respectively.
    For each edge $e\in E$, Lemma~\ref{lem:split} shows that the exact state splitting in this sub-algorithm achieves the equality in~\eqref{eq:encoding_cost}, where $e=\left\{v_k,c\right\}$ in the above case.

    \textit{Only if part}:
    We use LOCC monotonicity of the Schmidt rank~\cite{L2} in the state transformation defined as Equation~\eqref{eq:encoding_state_transformation} in Proposition~\ref{lem:encoding_state_transformation}.
    Consider an arbitrary edge $e=\left\{p(v_k),v_k\right\}\in E$ where $v_k\neq v_1$.
    The Schmidt rank of the initial state $\Ket{\Phi_D^+}^{Rv_1}\otimes\bigotimes_{e\in E}\Ket{\Phi_{M_e}^+}^{e}$ between the parties in $D'_{v_k}$, that is, $v_k$ itself and $v_k$'s descendants, and the other parties in $\left\{R\right\}\cup V\setminus D'_{v_k}$ is $M_e$.
    After performing an LOCC map $\id^R\otimes\mathcal{S}$, the Schmidt rank of the final state $\ket{\tilde{\Phi}_D^+}^{Rv_1\cdots v_N}$ with respect to the same bipartition is $\rank\tilde{\Phi}_{D,e}^{+}$.
    Since the Schmidt rank is monotonically non-increasing under LOCC,
    we have
    $M_e\geqq\rank\tilde{\Phi}_{D,e}^{+}$.
    Therefore,
    we conclude
    $\log_2 M_e\geqq\log_2\rank\tilde{\Phi}_{D,e}^{+}$
    for each $e\in E$.
\end{proof}

\section{\label{sec:decoding}Entanglement cost of concentrating quantum information}
We derive an upper bound of entanglement cost of concentrating quantum information over any tree for any isometry, and show that the entanglement cost of concentrating quantum information is not larger than that of spreading quantum information, and can be strictly smaller.
To evaluate the entanglement cost, we generalize the two-party algorithm for exact state merging and provide an algorithm for concentrating quantum information over any tree-topology network connecting multiple parties.

We evaluate the entanglement cost of concentrating quantum information, using the following notations.
Given any tree $T=(V,E)$ and any isometry $U$,
our algorithm achieves the state transformation defined as Equation~\eqref{eq:decoding_state_transformation} in Proposition~\ref{lem:encoding_state_transformation} equivalent to spreading quantum information.
We write the initial state shared among $R,v_1,\ldots,v_N$ as
$\Ket{\Phi^N}\coloneqq\ket{\tilde{\Phi}_D^+}\in\mathcal{H}^R\otimes\tilde{\mathcal{H}}$,
and the states during the algorithm as a sequence
\begin{equation}
    \label{eq:sequence}
    \Ket{\Phi^N}\rightarrow\Ket{\Phi_{\boldsymbol{m}_{N-1}}^{N-1}}\rightarrow\cdots\rightarrow\Ket{\Phi_{\boldsymbol{m}_1}^1},
\end{equation}
where the subscript
$\boldsymbol{m}_k\coloneqq\left({m}^{v_N},\ldots,{m}^{v_{k+1}}\right)$
denotes a tuple representing measurement outcomes obtained during the algorithm and
\begin{equation}
    \begin{split}
        \label{eq:v_k}
        \Ket{\Phi_{\boldsymbol{m}_k}^k}\in\mathcal{H}^R\otimes\bigotimes_{v\in V_k}\tilde{\mathcal{H}}^{v},
        \quad V_k\coloneqq\left\{v_1,\ldots,v_k\right\},
    \end{split}
\end{equation}
for each $k\in\{1,\ldots,N-1\}$.
For any $\boldsymbol{m}_1$, the last state $\Ket{\Phi_{\boldsymbol{m}_1}^1}$ in sequence~\eqref{eq:sequence} is convertible into $\Ket{\Phi_D^+}\in\mathcal{H}^R\otimes\mathcal{H}$ by a local isometry transformation by party $v_1$, and recurrence relation to determine sequence~\eqref{eq:sequence} is given in the proof of the following theorem (by Equation~\eqref{eq:recursive_state}).
The following theorem uses the Koashi-Imoto decomposition of $\Ket{\Phi_{\boldsymbol{m}_k}^k}$ presented in Lemma~\ref{lem:koashi_imoto_decomposition_tripartite}, which yields
decompositions
\begin{equation}
   \begin{split}
    &\tilde{\mathcal{H}}^{v_k}=\bigoplus_{j=0}^{J_{\boldsymbol{m}_k}-1}\mathcal{H}_{\boldsymbol{m}_k}^{{\left(v_k\right)}_j^L}\otimes\mathcal{H}_{\boldsymbol{m}_k}^{{\left(v_k\right)}_j^R},\quad
    \bigotimes_{v\in V_{k-1}}\tilde{\mathcal{H}}^{v}=\bigoplus_{j=0}^{J_{\boldsymbol{m}_k}-1}\mathcal{H}_{\boldsymbol{m}_k}^{{\left(v_1\cdots v_{k-1}\right)}_j^L}\otimes\mathcal{H}_{\boldsymbol{m}_k}^{{\left(v_1\cdots v_{k-1}\right)}_j^R},\\
    &\Ket{\Phi_{\boldsymbol{m}_k}^k}=\bigoplus_{j=0}^{J_{\boldsymbol{m}_k}-1}\sqrt{p_{\boldsymbol{m}_k}\left(j\right)}\Ket{\omega_{\boldsymbol{m}_k,j}}\otimes\Ket{\phi_{\boldsymbol{m}_k,j}},
   \end{split}
\end{equation}
where $V_{k-1}$ is defined as Equation~\eqref{eq:v_k},
$p_{\boldsymbol{m}_k}\left(j\right)$ is a probability distribution, and,
for each $j\in\{0,\ldots,J_{\boldsymbol{m}_k}-1\}$,
$\Ket{\omega_{\boldsymbol{m}_k, j}}\in\mathcal{H}_{\boldsymbol{m}_k}^{{\left(v_k\right)}_j^L}\otimes\mathcal{H}_{\boldsymbol{m}_k}^{{\left(v_1\cdots v_{k-1}\right)}_j^L}$ and
$\Ket{\phi_{\boldsymbol{m}_k,j}}\in\mathcal{H}^R\otimes\mathcal{H}_{\boldsymbol{m}_k}^{{\left(v_k\right)}_j^R}\otimes\mathcal{H}_{\boldsymbol{m}_k}^{{\left(v_1\cdots v_{k-1}\right)}_j^R}$.
Also we let $\lambda_{\boldsymbol{m}_k,0}^{{\left(v_k\right)}_j^L}$ denote the largest eigenvalue of the reduced state of $\tr_{{\left(v_1\cdots v_{k-1}\right)}_j^L} \Ket{\omega_{\boldsymbol{m}_k,j}}\Bra{\omega_{\boldsymbol{m}_k,j}}$ of $\Ket{\omega_{\boldsymbol{m}_k,j}}$ on $\mathcal{H}_{\boldsymbol{m}_k}^{{\left(v_k\right)}_j^L}$.
\begin{theorem}
\label{thm:concentrating}
    \textit{Entanglement cost of concentrating quantum information.}
    Given any tree $T = (V,E)$ and any isometry $U$,
    concentrating quantum information over $T$ for $U$ is achievable if
    there exists an ascending labeling of the vertices satisfying, for each $e=\left\{p\left(v_k\right),v_k\right\}\in E$,
    \begin{equation}
        \label{eq:decoding_cost_upper}
        \log_2 M_e \geqq \max_{\boldsymbol{m}_k,j}\left\{\log_2\left\lceil\lambda_{\boldsymbol{m}_k,0}^{{\left(v_k\right)}_j^L}\dim\mathcal{H}_{\boldsymbol{m}_k}^{{\left(v_k\right)}_j^R}\right\rceil\right\},
    \end{equation}
    where $\lceil{}\cdots{}\rceil$ is the ceiling function.
\end{theorem}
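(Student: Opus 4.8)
The plan is to realise the state transformation~\eqref{eq:decoding_state_transformation} of Proposition~\ref{lem:encoding_state_transformation}, from which achievability of concentrating quantum information follows. The construction is the reverse of the one used for Theorem~\ref{thm:spreading}: rather than splitting outward from the root, I sweep inward from the leaves toward $v_1$, applying exact state merging (Lemma~\ref{lem:merge}) once for each edge. Fix an ascending labelling of $V$, so that every descendant of $v_k$ has a label larger than $k$ and $v_N$ is a leaf, and process the non-root vertices in the order $v_N,v_{N-1},\ldots,v_2$. Starting from $\Ket{\Phi^N}=\ket{\tilde{\Phi}_D^+}$ and, inductively, from the state $\Ket{\Phi^k_{\boldsymbol{m}_k}}$ of Equation~\eqref{eq:v_k} (shared among $R$ and $V_k=\{v_1,\ldots,v_k\}$, with $v_k$ holding the combined systems of $v_k$ and all of its descendants by the ascending labelling), I merge $v_k$ into its parent $p(v_k)\in V_{k-1}$ via Lemma~\ref{lem:merge}, producing $\Ket{\Phi^{k-1}_{\boldsymbol{m}_{k-1}}}$: party $v_k$ performs the projective measurement with outcome $m^{v_k}$ furnished by the merging construction applied to $\Ket{\Phi^k_{\boldsymbol{m}_k}}$, broadcasts $m^{v_k}$, and the remaining parties apply the associated conditional isometry, consuming $\Ket{\Phi^+_{M_e}}$ on the edge $e=\{p(v_k),v_k\}$. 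This gives the recurrence relation defining sequence~\eqref{eq:sequence},
\begin{equation}
    \label{eq:recursive_state}
    \Ket{\Phi^{k-1}_{\boldsymbol{m}_{k-1}}}=\left(\mathbbm{1}^R\otimes\Bra{m^{v_k}}^{v_k}\otimes U^{v_1\cdots v_{k-1}}_{\boldsymbol{m}_{k-1}}\right)\left(\Ket{\Phi^k_{\boldsymbol{m}_k}}\otimes\Ket{\Phi^+_{M_e}}\right),\quad e=\{p(v_k),v_k\},
\end{equation}
where $\boldsymbol{m}_{k-1}=(m^{v_N},\ldots,m^{v_k})$ and, for every outcome history, $\Ket{\Phi^1_{\boldsymbol{m}_1}}$ is convertible to $\Ket{\Phi^+_D}$ by a local isometry at $v_1$.

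To read off the cost of the edge $e=\{p(v_k),v_k\}$, I apply Lemma~\ref{lem:merge} using the Koashi--Imoto decomposition (Lemma~\ref{lem:koashi_imoto_decomposition_tripartite}) of $\Ket{\Phi^k_{\boldsymbol{m}_k}}$ with reference $R$, one party $v_k$, and the remaining parties $v_1,\ldots,v_{k-1}$ jointly playing the role of the other party, that is, exactly the decomposition displayed just before the theorem, with blocks $\mathcal{H}_{\boldsymbol{m}_k}^{{\left(v_k\right)}_j^L}$, $\mathcal{H}_{\boldsymbol{m}_k}^{{\left(v_k\right)}_j^R}$ and largest eigenvalue $\lambda_{\boldsymbol{m}_k,0}^{{\left(v_k\right)}_j^L}$. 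Lemma~\ref{lem:merge} then states that the merge at step $k$ succeeds as soon as the Schmidt rank of the entangled state on $e$ is at least $\max_j\lceil\lambda_{\boldsymbol{m}_k,0}^{{\left(v_k\right)}_j^L}\dim\mathcal{H}_{\boldsymbol{m}_k}^{{\left(v_k\right)}_j^R}\rceil$. Because the protocol is deterministic while this decomposition, and hence the required Schmidt rank, depends on the earlier outcomes $\boldsymbol{m}_k$, I fix $M_e$ to be the maximum of this quantity over all outcome histories $\boldsymbol{m}_k$ and over $j$, and on any branch needing less, $v_k$ and $p(v_k)$ use only a sub-block of $\Ket{\Phi^+_{M_e}}$; this is precisely the bound~\eqref{eq:decoding_cost_upper}.

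It remains to check the output. The reference $R$ is untouched at every step, so exact state merging leaves the marginal on $\mathcal{H}^R$ invariant along sequence~\eqref{eq:sequence}; at the start it equals $\tr_{\tilde{\mathcal{H}}}\ket{\tilde{\Phi}_D^+}\bra{\tilde{\Phi}_D^+}=\frac{1}{D}\mathbbm{1}^R$. After the final merge, $\Ket{\Phi^1_{\boldsymbol{m}_1}}$ is a pure state shared between $R$ and $v_1$ whose $R$-marginal is maximally mixed of rank $D$, hence a maximally entangled state of Schmidt rank $D$; party $v_1$ then converts it to $\Ket{\Phi^+_D}$ of Equation~\eqref{eq:data_maximally_entangled_state} by a local isometry, independently of $\boldsymbol{m}_1$. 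Composing all local operations, classical broadcasts, and this final isometry yields an LOCC map $\mathcal{C}$ satisfying Equation~\eqref{eq:decoding_state_transformation}, so by Proposition~\ref{lem:encoding_state_transformation} concentrating quantum information over $T$ for $U$ is achievable whenever~\eqref{eq:decoding_cost_upper} holds.

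The step I expect to be the main obstacle is justifying this multiparty version of exact state merging. In Lemma~\ref{lem:merge}, ``$B$'' is a single party, whereas here the part of the decomposition complementary to $v_k$ is the joint system of $v_1,\ldots,v_{k-1}$, while physically only $p(v_k)$ may touch the edge-$e$ entangled state; one must therefore show that the conditional isometry $U^{v_1\cdots v_{k-1}}_{\boldsymbol{m}_{k-1}}$ factors as a local isometry at $p(v_k)$ that absorbs $v_k$'s genuinely $R$-correlated block $\mathcal{H}_{\boldsymbol{m}_k}^{{\left(v_k\right)}_j^R}$ through $\Ket{\Phi^+_{M_e}}$, together with local operations at the other parties that re-attach the copy of $v_k$'s ``classical'' block $\mathcal{H}_{\boldsymbol{m}_k}^{{\left(v_k\right)}_j^L}$ from correlations they already hold, so that the whole step is genuine LOCC among $v_1,\ldots,v_k$ using only $\Ket{\Phi^+_{M_e}}$. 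This is where the tree topology enters: at the moment $v_k$ is processed it is a leaf of the induced tree on $V_k$ adjacent to $p(v_k)$, and one exploits the Koashi--Imoto structure of $\Ket{\Phi^k_{\boldsymbol{m}_k}}$ together with the explicit construction of Reference~\cite{Y11} to carry this out. Once that is in place, the remaining bookkeeping, that sequence~\eqref{eq:sequence} is well defined for every outcome history and that padding $M_e$ to the worst case preserves exactness on all branches, is routine.
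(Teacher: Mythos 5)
Your overall construction --- sweeping from $v_N$ down to $v_2$ with one application of exact state merging per edge, reading the cost of $e=\left\{p\left(v_k\right),v_k\right\}$ from the Koashi--Imoto decomposition of the intermediate state with $A=v_k$ and $B=v_1\cdots v_{k-1}$, and maximizing over outcome histories before a final local isometry at $v_1$ --- is the paper's construction. The genuine gap is exactly the one you flag at the end and leave unresolved: your recurrence applies the conditional merging isometry $U^{v_1\cdots v_{k-1}}_{\boldsymbol{m}_{k-1}}$ at every step, and for this to be LOCC it would have to factor into a local isometry at $p\left(v_k\right)$ together with local operations at the other parties of $V_{k-1}$. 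The paper states explicitly that this is not possible in general (``it may not be possible for the parties $v_N,\ldots,v_2$ to locally perform $U_{m^{v_N}},\ldots,U_{m^{v_2}}$ during the sub-algorithm, since these isometries can be nonlocal''), so the route you sketch through a factorization of $U_m^B$ would fail for a generic isometry $U$.

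The paper's resolution is to defer all corrections rather than distribute them: each $v_k$ only performs the local projective measurement $\Bra{m^{v_k}}$ on $\tilde{\mathcal{H}}^{v_k}$ and its half of $\Ket{\Phi^+_{M_e}}$ and broadcasts the outcome; the intermediate state is \emph{defined} as the raw post-measurement state $\Ket{\Phi^{k-1}_{\boldsymbol{m}_{k-1}}}=\left(\mathbbm{1}\otimes\Bra{m^{v_k}}\right)\left(\Ket{\Phi^{k}_{\boldsymbol{m}_{k}}}\otimes\Ket{\Phi^+_{M_e}}\right)$, with $p\left(v_k\right)$'s half of the resource absorbed into $\tilde{\mathcal{H}}^{p\left(v_k\right)}$, and the Koashi--Imoto data entering the bound~\eqref{eq:decoding_cost_upper} are computed on these \emph{uncorrected} states. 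Only at the very end, when $\Ket{\Phi^1_{\boldsymbol{m}_1}}$ is held entirely by $v_1$, does $v_1$ apply the single composite isometry $U^{v_1}_{\boldsymbol{m}_1}=U^\dag U_{m^{v_N}}\cdots U_{m^{v_2}}$, which is by then an isometry on $v_1$'s local system alone. A secondary consequence of your version: if the exact corrections really were applied at each step, the intermediate states would coincide (up to relocation of systems) with $\ket{\tilde{\Phi}_D^+}$ and be independent of $\boldsymbol{m}_k$, so the outcome-dependent quantities $\lambda^{{\left(v_k\right)}_j^L}_{\boldsymbol{m}_k,0}$ and $\dim\mathcal{H}^{{\left(v_k\right)}_j^R}_{\boldsymbol{m}_k}$ you would compute are not those of the theorem, which the paper defines through the projection-only recurrence~\eqref{eq:recursive_state} (compare the sign-dependent states in Application~\ref{ex:1}). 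Apart from this one (essential) step, your bookkeeping, the padding of $M_e$ to the worst-case branch, and the final argument that $\Ket{\Phi^1_{\boldsymbol{m}_1}}$ is maximally entangled with $R$ all match the paper.
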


\begin{proof}
    Given any tree $T = (V,E)$ with an ascending labeling of the vertices and any isometry $U$,
    we construct an algorithm for the state transformation defined as Equation~\eqref{eq:decoding_state_transformation} in Proposition~\ref{lem:encoding_state_transformation} achieving the equality in~\eqref{eq:decoding_cost_upper} for each $e\in E$.
    In the algorithm, the parties other than the root $v_1$ sequentially perform a sub-algorithm using exact state merging presented in Lemma~\ref{lem:merge}, where each of the parties $v_N,\ldots,v_2$ in this order is regarded as the sender $A$ in these sequential applications of the exact state merging.
    After all of these parties performing the sub-algorithm, the root party $v_1$ performs an isometry to obtain the state $\Ket{\Phi_D^+}$, which achieves concentrating quantum information.
    In the following, we describe the sub-algorithm and the isometry for the root party $v_1$.

    For any $v_k\in \left\{v_N,\ldots,v_2\right\}$, we describe the sub-algorithm for party $v_k$.
    We may write the state $\Ket{\Phi^{N}}=\Ket{\tilde{\Phi_D^+}}$ as $\Ket{\Phi_{\boldsymbol{m}_{N}}^{N}}$ for brevity.
    At the beginning of $v_k$'s sub-algorithm, we assume that the party $v_k$ has the reduced state on $\tilde{\mathcal{H}}^{v_k}$ of
    \begin{equation}
       \label{eq:assumption_merge}
       \Ket{\Phi_{\boldsymbol{m}_{k}}^{k}}\in\mathcal{H}^R\otimes\tilde{\mathcal{H}}^{v_k}\otimes\bigotimes_{m=1}^{k-1}\tilde{\mathcal{H}}^{v_m}.
    \end{equation}
    Based on the classical information $m^{v_N},\ldots,m^{v_{k+1}}$ of measurement outcomes sent from other parties by classical communication, the party $v_k$ calculates the measurement basis ${\left\{\Ket{m^{v_k}}\right\}}_{m^{v_k}}$ used in Equation~\eqref{eq:locc_merge} in Lemma~\ref{lem:merge} for the exact state merging of $\Ket{\Phi_{\boldsymbol{m}_{k}}^{k}}$ using the results presented in Reference~\cite{Y11},
    in which the systems $\mathcal{H}^R$, $\tilde{\mathcal{H}}^{v_k}$, and $\bigotimes_{m=1}^{k-1}\tilde{\mathcal{H}}^{v_m}$ are regarded as $\mathcal{H}^{R^\prime}$, $\mathcal{H}^A$, and $\mathcal{H}^B$ in Equation~\eqref{eq:locc_merge}, respectively.
    The party $v_k$ performs this measurement,
    and the states in the sequence~\eqref{eq:sequence} are recursively described as
    \begin{equation}
        \label{eq:recursive_state}
        \Ket{\Phi_{\boldsymbol{m}_{k-1}}^{k-1}}=\left(\mathbbm{1}\otimes\Bra{m^{v_k}}\right)\left(\Ket{\Phi_{\boldsymbol{m}_{k}}^{k}}\otimes\Ket{\Phi_{M_{e}}^+}^{e}\right),
    \end{equation}
    where $\mathbbm{1}$ is the identity operator on the system of the parties other than $v_k$, $\Ket{\Phi_{M_{e}}^+}^{e}$ with $e=\left\{p\left(v_k\right),v_k\right\}$ is the resource state shared between $v_k$ and $v_k$'s parent $p\left(v_k\right)$, and the system of party $p\left(v_k\right)$ for the resource state $\Ket{\Phi_{M_{e}}^+}^{e}$ on the right hand side is regarded on the left hand side as part of $\tilde{\mathcal{H}}^{p\left(v_k\right)}$ of the party $p\left(v_k\right)$.
    After this measurement, the party $v_k$ sends the measurement outcome $m^{v_k}$ to all the parties by classical communication, where the post-measurement state is represented by $\Ket{\Phi_{\boldsymbol{m}_{k-1}}^{k-1}}$.
    Note that the assumption~\eqref{eq:assumption_merge} is satisfied for the next party $v_{k-1}$ performing the sub-algorithm, that is,
    \begin{equation}
       \Ket{\Phi_{\boldsymbol{m}_{k-1}}^{k-1}}\in\mathcal{H}^R\otimes\tilde{\mathcal{H}}^{v_{k-1}}\otimes\bigotimes_{m=1}^{k-2}\tilde{\mathcal{H}}^{v_m},
    \end{equation}
    because of an ascending order of the vertices.
    For each edge $e=\left\{p\left(v_k\right),v_k\right\}\in E$, Lemma~\ref{lem:merge} shows that the exact state merging in this sub-algorithm achieves the equality in~\eqref{eq:decoding_cost_upper}.

    As for the party $v_1$, we derive an isometry $U_{\boldsymbol{m}_1}^{v_1}$ to obtain the state $\Ket{\Phi_D^+}$.
    After the parties $v_N,\ldots,v_2$ performing the above sub-algorithm, the shared state reduces to
    $\Ket{\Phi_{\boldsymbol{m}_{1}}^{1}}\in\mathcal{H}^R\otimes\tilde{\mathcal{H}}^{v_1}$.
    For each $v_k\in\left\{v_N,\ldots,v_2\right\}$,
    the isometry $U_{m^{v_k}}$ corresponding to $U_m^B$ in Equation~\eqref{eq:locc_merge} in Lemma~\ref{lem:merge} is used to recover the state $\Ket{\Phi_{\boldsymbol{m}_{k}}^{k}}$ from the post-measurement state $\Ket{\Phi_{\boldsymbol{m}_{k-1}}^{k-1}}$ corresponding to $\Bra{m^{v_k}}$, that is,
    $\Ket{\Phi_{\boldsymbol{m}_{k}}^{k}}=U_{m^{v_k}}\Ket{\Phi_{\boldsymbol{m}_{k-1}}^{k-1}}$.
    Repeating the above yields
    \begin{equation}
        \ket{\tilde{\Phi}_D^+}=\Ket{\Phi^N}=U_{m^{v_N}}\cdots U_{m^{v_2}}\Ket{\Phi_{\boldsymbol{m}_{1}}^{1}}.
    \end{equation}
    Consequently, the party $v_1$ obtains,
    for any $\boldsymbol{m}_1$,
    \begin{equation}
        \Ket{\Phi_D^+}=U_{\boldsymbol{m}_1}^{v_1}\Ket{\Phi_{\boldsymbol{m}_{1}}^{1}},\quad 
        U_{\boldsymbol{m}_1}^{v_1}\coloneqq U^\dag U_{m^{v_N}}\cdots U_{m^{v_2}}.
    \end{equation}
    Note that it may not be possible for the parties $v_N,\ldots,v_2$ to locally perform $U_{m^{v_N}},\ldots,U_{m^{v_2}}$ during the sub-algorithm, since these isometries can be nonlocal.
\end{proof}

We show that the entanglement cost of concentrating quantum information is not larger than that of spreading quantum information, as presented in the following theorem.
Note that the former can be strictly smaller than the latter, as we demonstrate in Application~\ref{ex:1} and~\ref{ex:2} in the next section.
\begin{theorem}
    \textit{Comparison of entanglement cost between spreading and concentrating quantum information.}
    Given any tree $T=(V,E)$ with any ascending labeling and any isometry $U$,
    \begin{equation}
        \max_{\boldsymbol{m}_k,j}\left\{\log_2\left\lceil\lambda_{\boldsymbol{m}_k,0}^{{\left(v_k\right)}_j^L}\dim\mathcal{H}_{\boldsymbol{m}_k}^{{\left(v_k\right)}_j^R}\right\rceil\right\}\leqq\log_2\rank\tilde{\Phi}_{D,e}^{+}
    \end{equation}
    where the notations are the same as those in Theorems~\ref{thm:spreading} and~\ref{thm:concentrating}.
\end{theorem}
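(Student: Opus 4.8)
\emph{Proof plan.}\quad Fix an edge $e=\{p(v_k),v_k\}\in E$; since $e$ is arbitrary it suffices to prove the inequality for this $e$. By Lemma~\ref{lem:merge}, the $e$-th term on the left-hand side, for a fixed outcome tuple $\boldsymbol{m}_k$, is the entanglement cost of the exact state merging of $\Ket{\Phi_{\boldsymbol{m}_k}^k}$ carried out at $v_k$ in the algorithm of Theorem~\ref{thm:concentrating}, with $\mathcal{H}^R$, $\tilde{\mathcal{H}}^{v_k}$ and $\bigotimes_{v\in V_{k-1}}\tilde{\mathcal{H}}^{v}$ in the roles of $\mathcal{H}^{R'}$, $\mathcal{H}^A$ and $\mathcal{H}^B$. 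The plan is to bound this by $\log_2\rank\tilde{\Phi}_{D,e}^{+}$ in two independent steps: a two-party estimate relating the merging cost to the Schmidt rank of $\Ket{\Phi_{\boldsymbol{m}_k}^k}$ across the cut $\tilde{\mathcal{H}}^{v_k}$ versus the rest, and a tree-structure estimate relating that Schmidt rank to $\rank\tilde{\Phi}_{D,e}^{+}$.

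\emph{Step 1 --- a two-party estimate.} Writing $\rho_{\boldsymbol{m}_k}^{v_k}$ for the reduced state of $\Ket{\Phi_{\boldsymbol{m}_k}^k}$ on $\tilde{\mathcal{H}}^{v_k}$, I would show
\[
  \max_{j}\Bigl\{\log_2\bigl\lceil\lambda_{\boldsymbol{m}_k,0}^{(v_k)_j^L}\dim\mathcal{H}_{\boldsymbol{m}_k}^{(v_k)_j^R}\bigr\rceil\Bigr\}\leqq\log_2\rank\rho_{\boldsymbol{m}_k}^{v_k}.
\]
First, $\lambda_{\boldsymbol{m}_k,0}^{(v_k)_j^L}$ is an eigenvalue of a density operator, so $\lambda_{\boldsymbol{m}_k,0}^{(v_k)_j^L}\leqq1$ and hence $\lceil\lambda_{\boldsymbol{m}_k,0}^{(v_k)_j^L}\dim\mathcal{H}_{\boldsymbol{m}_k}^{(v_k)_j^R}\rceil\leqq\dim\mathcal{H}_{\boldsymbol{m}_k}^{(v_k)_j^R}$. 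Second, in the Koashi--Imoto decomposition of Lemma~\ref{lem:koashi_imoto_decomposition_tripartite} each factor $\mathcal{H}_{\boldsymbol{m}_k}^{(v_k)_j^R}$ is minimal, so $\phi_{\boldsymbol{m}_k,j}^{(v_k)_j^R}$ has full rank; since $\rho_{\boldsymbol{m}_k}^{v_k}=\bigoplus_j p_{\boldsymbol{m}_k}(j)\,\omega_{\boldsymbol{m}_k,j}^{(v_k)_j^L}\otimes\phi_{\boldsymbol{m}_k,j}^{(v_k)_j^R}$, its rank is $\sum_j\rank(\omega_{\boldsymbol{m}_k,j}^{(v_k)_j^L})\dim\mathcal{H}_{\boldsymbol{m}_k}^{(v_k)_j^R}\geqq\dim\mathcal{H}_{\boldsymbol{m}_k}^{(v_k)_j^R}$ for every $j$.

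\emph{Step 2 --- a tree-structure estimate.} I would show $\rank\rho_{\boldsymbol{m}_k}^{v_k}\leqq\rank\tilde{\Phi}_{D,e}^{+}$ using monotonicity of the Schmidt rank under local operations, exactly as in the ``only if'' part of Theorem~\ref{thm:spreading}. Let $C$ be the bipartite cut separating the parties $D'_{v_k}$ from $\{R\}\cup(V\setminus D'_{v_k})$. Since $T$ is a tree, $e$ is the unique edge crossing $C$, so the initial state $\ket{\tilde{\Phi}_D^+}\otimes\bigotimes_{e'\in E}\Ket{\Phi_{M_{e'}}^+}^{e'}$ of the state transformation~\eqref{eq:decoding_state_transformation} has Schmidt rank $M_e\cdot\rank\tilde{\Phi}_{D,e}^{+}$ across $C$. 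Running the algorithm of Theorem~\ref{thm:concentrating} through the merging steps of $v_N,\dots,v_{k+1}$ and conditioning on the outcomes $\boldsymbol{m}_k$ amounts to applying the projections $\Bra{m^{v_N}},\dots,\Bra{m^{v_{k+1}}}$, each acting on a single party and hence on only one side of $C$, so the Schmidt rank across $C$ does not increase. By the recursion~\eqref{eq:recursive_state} the resources $\Ket{\Phi_{M_{e_j}}^+}^{e_j}$ with $e_j=\{p(v_j),v_j\}$ and $j\leqq k$ remain untouched at this stage, so the resulting state is $\Ket{\Phi_{\boldsymbol{m}_k}^k}\otimes\bigotimes_{j=2}^{k}\Ket{\Phi_{M_{e_j}}^+}^{e_j}$; since $D'_{v_k}\cap V_k=\{v_k\}$ by the ascending labeling (see~\eqref{eq:v_k}) and the only one of these residual resources crossing $C$ is $\Ket{\Phi_{M_e}^+}^e$, this state has Schmidt rank $M_e\cdot\rank\rho_{\boldsymbol{m}_k}^{v_k}$ across $C$. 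Monotonicity then yields $M_e\cdot\rank\rho_{\boldsymbol{m}_k}^{v_k}\leqq M_e\cdot\rank\tilde{\Phi}_{D,e}^{+}$, and the claim follows.

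Combining Steps~1 and~2, taking $\log_2$ and maximizing over $\boldsymbol{m}_k$ and $j$ gives the inequality for this $e$, which proves the theorem. Step~1 is routine once the minimality of the Koashi--Imoto decomposition is invoked. I expect the main obstacle to be the bookkeeping in Step~2: pinning down exactly which subsystems and which residual resource states make up $\Ket{\Phi_{\boldsymbol{m}_k}^k}$ at the relevant stage, checking that $e$ is the unique edge of $T$ joining the subtree $D'_{v_k}$ to its complement, and confirming that conditioning on the branch $\boldsymbol{m}_k$ does not break Schmidt-rank monotonicity (each branch being realized by operators acting within the individual parties).
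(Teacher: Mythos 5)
Your proposal is correct and follows essentially the same route as the paper's proof: Schmidt-rank monotonicity under LOCC across the bipartition $D'_{v_k}$ versus $\{R\}\cup(V\setminus D'_{v_k})$ to get $\rank(\Phi_{\boldsymbol{m}_k}^k)^{v_k}\leqq\rank\tilde{\Phi}_{D,e}^{+}$, combined with the Koashi--Imoto bound $\rank(\Phi_{\boldsymbol{m}_k}^k)^{v_k}\geqq\dim\mathcal{H}_{\boldsymbol{m}_k}^{(v_k)_j^R}\geqq\lceil\lambda_{\boldsymbol{m}_k,0}^{(v_k)_j^L}\dim\mathcal{H}_{\boldsymbol{m}_k}^{(v_k)_j^R}\rceil$. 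Your Step 2 bookkeeping (the unique crossing edge, the residual resource states, and locality of the conditioning measurements) merely spells out details the paper leaves implicit.
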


\begin{proof}
   We use LOCC monotonicity of the Schmidt rank~\cite{L2} in the state transformation defined as Equation~\eqref{eq:decoding_state_transformation} in Proposition~\ref{lem:encoding_state_transformation}, and properties of the Koashi-Imoto decomposition.
    We regard the given tree $T=(V,E)$ as the rooted tree with its root $v_1$, and we consider an arbitrary edge $e=\left\{p(v_k),v_k\right\}\in E$ where $v_k\neq v_1$.
    The Schmidt rank of the initial state
    $\ket{\tilde{\Phi}_D^+}^{Rv_1\cdots v_N}\otimes\bigotimes_{e\in E}\Ket{\Phi_{M_e}^+}^{e}$
    between the parties in $D'_{v_k}$ and the other parties in $\{R\}\cup V\setminus D'_{v_k}$ is $M_e\rank\tilde{\Phi}_{D,e}^{+}$.
    After the parties $v_N,\ldots,v_{k-1}$ performing the above sub-algorithms, which is an LOCC map, the state reduces to $\Ket{\Phi_{\boldsymbol{m}_k}^k}\otimes\bigotimes_{e\in E_k}\Ket{\Phi_{M_e}^+}^{e}$,
    where $\Ket{\Phi_{\boldsymbol{m}_k}^k}$ is defined as Equation~\eqref{eq:v_k} and
    $E_k\coloneqq\left\{\left\{p\left(v_2\right),v_2\right\},\ldots,\left\{p\left(v_k\right),v_k\right\}\right\}$.
    The Schmidt rank of $\Ket{\Phi_{\boldsymbol{m}_k}^k}\otimes\bigotimes_{e\in E_k}\Ket{\Phi_{M_e}^+}^{e}$ with respect to the same bipartition of the parties as the above is $M_e \rank{\left(\Phi_{\boldsymbol{m}_k}^k\right)}^{v_k}$, where ${\left(\Phi_{\boldsymbol{m}_k}^k\right)}^{v_k}$ denotes the reduced state of the system $\tilde{\mathcal{H}}^{v_k}$ for the state $\Ket{\Phi_{\boldsymbol{m}_k}^k}$.
    Since the Schmidt rank is monotonically nonincreasing under LOCC,
    it holds that
    \begin{equation}
        M_e\rank\tilde{\Phi}_{D,e}^{+}\geqq M_e\rank{\left(\Phi_{\boldsymbol{m}_k}^k\right)}^{v_k}.
    \end{equation}
    By construction of the Koashi-Imoto decomposition, it holds that for any ${\boldsymbol{m}_k}$ and $j$,
    \begin{equation}
        \rank{\left(\Phi_{\boldsymbol{m}_k}^k\right)}^{v_k}\geqq\dim\mathcal{H}_{\boldsymbol{m}_k}^{{\left(v_k\right)}_j^R}.
    \end{equation}
    Since $\lambda_{\boldsymbol{m}_k,0}^{{\left(v_k\right)}_j^L}\leqq 1$,
    we obtain
    \begin{equation}
        \dim\mathcal{H}_{\boldsymbol{m}_k}^{{\left(v_k\right)}_j^R}\geqq\left\lceil\lambda_{\boldsymbol{m}_k,0}^{{\left(v_k\right)}_j^L}\dim\mathcal{H}_{\boldsymbol{m}_k}^{{\left(v_k\right)}_j^R}\right\rceil.
    \end{equation}
    Thus, for any ${\boldsymbol{m}_k}$ and $j$, we have
    \begin{equation}
        \log_2\rank\tilde{\Phi}_{D,e}^{+}\geqq\log_2\left\lceil\lambda_{\boldsymbol{m}_k,0}^{{\left(v_k\right)}_j^L}\dim\mathcal{H}_{\boldsymbol{m}_k}^{{\left(v_k\right)}_j^R}\right\rceil.
    \end{equation}
    Therefore, we obtain
    \begin{equation}
        \max_{\boldsymbol{m}_k,j}\left\{\log_2\left\lceil\lambda_{\boldsymbol{m}_k,0}^{{\left(v_k\right)}_j^L}\dim\mathcal{H}_{\boldsymbol{m}_k}^{{\left(v_k\right)}_j^R}\right\rceil\right\}\leqq\log_2\rank\tilde{\Phi}_{D,e}^{+}
    \end{equation}
    for each $e=\left\{p\left(v_k\right),v_k\right\}\in E$.
\end{proof}

\section{\label{sec:example}Applications}
We provide applications of our algorithms for spreading and concentrating quantum information.
Define
\begin{equation}
   \Ket{\pm}\coloneqq\frac{1}{\sqrt{2}}\left(\Ket{0}\pm\Ket{1}\right).
\end{equation}
In the following, we may omit $\otimes$ if obvious.

\begin{application}
\label{ex:1}
\textit{Application to one-shot distributed source compression for arbitrarily-small-dimensional systems.}
When applied to a star-topology tree, such as
\begin{equation}
    \label{eq:star}
    \begin{split}
        T=(V,E),\quad
        V=\left\{v_1,v_2,v_3,v_4\right\},\quad
        E=\left\{e_1=\left\{v_1,v_2\right\},e_2=\left\{v_1,v_3\right\},e_2=\left\{v_1,v_4\right\}\right\},
    \end{split}
\end{equation}
our algorithm for concentrating quantum information can be regarded as an algorithm for one-shot distributed source compression.\cite{D8,D9,A8}
Although our algorithms achieve transformations between $\Ket{\Phi_D^+}$ and $\ket{\tilde{\Phi}_D^+}$, that is, maximally entangled states between $R$ and the others, it is straightforward to prove that our algorithms also work for any pure state shared among the parties $R, v_1,\ldots,v_N$, which is proven for two parties in Reference~\cite{Y11}, and the same argument also applies to more than two parties.
We remark that our algorithm for concentrating quantum information is applicable to arbitrarily-small-dimensional systems as well as achieving zero error, while the existing algorithms for the one-shot distributed source compression~\cite{D8,D9,A8} are inefficient for small- and intermediate-scale states and cannot avoid nonzero approximation error, similarly to the case of $N=2$.~\cite{Y11}

For the network defined as Equation~\eqref{eq:star} and an isometry mapping the basis states as
\begin{equation}
    \Ket{0}\leftrightarrow\Ket{0}^{v_1}\Ket{0}^{v_2}\Ket{0}^{v_3}\Ket{0}^{v_4},\quad
    \Ket{1}\leftrightarrow\Ket{+}^{v_1}\Ket{1}^{v_2}\Ket{1}^{v_3}\Ket{1}^{v_4},
\end{equation}
Theorem~\ref{thm:spreading} yields the entanglement cost of spreading quantum information
\begin{equation}
    \begin{split}
        \log_2 M_{e_1}=1,\quad
        \log_2 M_{e_2}=1,\quad
        \log_2 M_{e_3}=1,
    \end{split}
\end{equation}
and Theorem~\ref{thm:concentrating} yields an algorithm for concentrating quantum information achieving
\begin{equation}
    \begin{split}
        \log_2 M_{e_1}=1,\quad
        \log_2 M_{e_2}=0\neq 1,\quad
        \log_2 M_{e_3}=0\neq 1.
    \end{split}
\end{equation}
In concentrating quantum information, the states in sequence~\eqref{eq:sequence} are calculated as
\begin{align}
   \label{eq:1}
   &\Ket{\Phi^{4}}=\ket{\tilde{\Phi}_D^+}=\frac{1}{\sqrt{2}}\Ket{0}^R\Ket{0}^{{\left(v_4\right)}_0^R}\Ket{000}^{{\left(v_1 v_2 v_3\right)}_0^R}\oplus\frac{1}{\sqrt{2}}\Ket{1}^R\Ket{1}^{{\left(v_4\right)}_1^R}\Ket{+11}^{{\left(v_1 v_2 v_3\right)}_1^R}\\
   \label{eq:2}
   &\xrightarrow{\text{Measurement in }\left\{\Ket{\pm}^{v_4}\right\}}
   \Ket{\Phi_{\left(\Ket{\pm}^{v_4}\right)}^{3}}=\frac{1}{\sqrt{2}}\Ket{0}^R\Ket{0}^{{\left(v_3\right)}_0^R}\Ket{00}^{{\left(v_1 v_2\right)}_0^R}\oplus\left(\pm\frac{1}{\sqrt{2}}\Ket{1}^R\Ket{1}^{{\left(v_3\right)}_1^R}\Ket{+1}^{{\left(v_1 v_2\right)}_1^R}\right)\\
   \label{eq:3}
   &\xrightarrow{\text{Measurement in }\left\{\Ket{\pm}^{v_3}\right\}}
   \begin{cases}
      &\Ket{\Phi_{\left(\Ket{\pm}^{v_4},\Ket{\pm}^{v_3}\right)}^{2}}=\frac{1}{\sqrt{2}}\Ket{0}^R\Ket{0}^{{\left(v_2\right)}_0^R}\Ket{0}^{{\left(v_1\right)}_0^R}+\frac{1}{\sqrt{2}}\Ket{1}^R\Ket{1}^{{\left(v_2\right)}_0^R}\Ket{+}^{{\left(v_1\right)}_0^R}\\
      &\Ket{\Phi_{\left(\Ket{\pm}^{v_4},\Ket{\mp}^{v_3}\right)}^{2}}=\frac{1}{\sqrt{2}}\Ket{0}^R\Ket{0}^{{\left(v_2\right)}_0^R}\Ket{0}^{{\left(v_1\right)}_0^R}-\frac{1}{\sqrt{2}}\Ket{1}^R\Ket{1}^{{\left(v_2\right)}_0^R}\Ket{+}^{{\left(v_1\right)}_0^R},\\
   \end{cases}
\end{align}
where the right-hand sides of Equations~\eqref{eq:1},~\eqref{eq:2}, and~\eqref{eq:3} shows the Koashi-Imoto decomposition of the state for each step in the sequence~\eqref{eq:sequence},
and the final state shared between $R$ and $v_1$ is obtained by transferring $v_2$'s one-qubit state by quantum teleportation from $v_2$ to $v_1$, which requires $\log_2 M_{e_1}=1$.
The difference in the resource requirements for concentrating quantum information between the edges $e_1$ and $e_2,e_3$ arises because of the difference between the direct-sum structure of the Koashi-Imoto decomposition of the states $\Ket{\Phi_{\left(\Ket{\pm}^{v_4},\Ket{\pm}^{v_3}\right)}^{2}},\Ket{\Phi_{\left(\Ket{\pm}^{v_4},\Ket{\mp}^{v_3}\right)}^{2}}$ after the parties $v_4,v_3$ performing the exact state merging and that of the states $\Ket{\Phi^{4}},\Ket{\Phi_{\left(\Ket{\pm}^{v_4}\right)}^{3}}$ before.

By contrast, if we swap the labeling of the parties $v_2$ and $v_3$, the tree $T$ changes to
\begin{equation}
   \begin{split}
      &T'=(V',E),\quad
      V'=\left\{v'_1=v_1, v'_2=v_3, v'_3=v_2, v'_4=v_4\right\},\\
      &E=\left\{e_1=\left\{v'_1,v'_3\right\}=\left\{v_1,v_2\right\},e_2=\left\{v'_1,v'_2\right\}=\left\{v_1,v_3\right\},e_3=\left\{v'_1,v'_4\right\}=\left\{v_1,v_4\right\}\right\},
   \end{split}
\end{equation}
and our algorithm for concentrating quantum information applied to this tree $T'$ achieves
\begin{equation}
    \begin{split}
        \log_2 M_{e_1}=0,\quad
        \log_2 M_{e_2}=1,\quad
        \log_2 M_{e_3}=0.
    \end{split}
\end{equation}

This example implies that the entanglement cost of concentrating quantum information for each edge of a graph may be affected by the labeling of the vertices, that is, the order of sequential applications of exact state merging.
In this case, to obtain the entanglement cost, we need to calculate the Koashi-Imoto-decomposition of the state for each step of the sequence~\eqref{eq:sequence} in the algorithm, by recursively applying Equation~\eqref{eq:recursive_state}.
\end{application}

\begin{application}
\label{ex:2}
\textit{Application to LOCC-assisted decoding in quantum secret sharing.}
Similarly to our algorithm for concentrating quantum information, Reference~\cite{G4} proposes schemes of quantum secret sharing and an algorithm for decoding shared secret of quantum information, in which the parties collaboratively perform LOCC to reduce total quantum communication required for the decoding.
While the algorithm in Reference~\cite{G4} works for a particular class of quantum codes such as the seven-qubit code,
our algorithms are applicable to any encoding and decoding in addition to this particular class.
For example, a different scheme of quantum secret sharing from those considered in Reference~\cite{G4} can be obtained from the five-qubit code,\cite{C,G2} which maps the basis states as
\begin{equation}
   \begin{split}
    \Ket{0}\leftrightarrow\frac{1}{4}(&\Ket{00000} + \Ket{11000} + \Ket{01100} + \Ket{00110}
    +\Ket{00011}+\Ket{10001}-\Ket{10100}-\Ket{01010}\\
    -&\Ket{00101}-\Ket{10010}-\Ket{01001}-\Ket{11110}
    -\Ket{01111}-\Ket{10111}-\Ket{11011}-\Ket{11101}),\\
    \Ket{1}\leftrightarrow\frac{1}{4}(&\Ket{11111} + \Ket{00111} + \Ket{10011} + \Ket{11001}
    +\Ket{11100} + \Ket{01110}-\Ket{01011}-\Ket{10101}\\
    -&\Ket{11010}-\Ket{01101}-\Ket{10110}-\Ket{00001}
    -\Ket{10000}-\Ket{01000}-\Ket{00100}-\Ket{00010}),
   \end{split}
\end{equation}
where each qubit on the right hand sides belongs to each of the parties $v_1,\ldots,v_5$.
For this isometry and a line-topology tree
\begin{equation}
    \begin{split}
        T=(V,E),\quad
        V=\left\{v_k:k=1,\ldots,N\right\},\quad
        E=\left\{e_k=\left\{v_k,v_{k+1}\right\}:k=1,\ldots,N-1\right\},
    \end{split}
\end{equation}
where $N=5$,
Theorem~\ref{thm:spreading} yields the entanglement cost of spreading quantum information
\begin{equation}
    \begin{split}
        \log_2 M_{e_1}=2,\quad
        \log_2 M_{e_2}=3,\quad
        \log_2 M_{e_3}=2,\quad
        \log_2 M_{e_4}=1,
    \end{split}
\end{equation}
and Theorem~\ref{thm:concentrating} yields an algorithm for concentrating quantum information achieving
\begin{equation}
    \label{eq:ex3}
    \log_2 M_{e_1}=0,\quad
    \log_2 M_{e_2}=0,\quad
    \log_2 M_{e_3}=0,\quad
    \log_2 M_{e_4}=0.
\end{equation}
In concentrating quantum information, the states in sequence~\eqref{eq:sequence} are calculated as
\begin{equation}
    \begin{split}
       &\Ket{\Phi^{5}}=\ket{\tilde{\Phi}_D^+}\propto\Ket{+}^R\Ket{+}^{{\left(v_5\right)}_0^R}{\left(\Ket{0000}^{{\left(v_1 v_2 v_3 v_4\right)}_0^R}+\cdots\right)}\oplus\Ket{-}^R\Ket{-}^{{\left(v_5\right)}_1^R}{\left(\Ket{0000}^{{\left(v_1 v_2 v_3 v_4\right)}_1^R}+\cdots\right)}\\
       &\downarrow{\text{Measurement in }\left\{\Ket{0}^{v_5},\Ket{1}^{v_5}\right\}}\\
       &\Ket{\Phi_{\left(\Ket{0}^{v_5}\right)}^{4}}\\
       &=\frac{1}{4}\left[\Ket{0}^{R}\left(\Ket{0000}^{v_1 v_2 v_3 v_4} + \Ket{1100} + \Ket{0110} + \Ket{0011}
       -\Ket{1010}-\Ket{0101}-\Ket{1001}-\Ket{1111}\right)\right.\\
       &\quad\left.+\Ket{1}^R\left(\Ket{1110}^{v_1 v_2 v_3 v_4} + \Ket{0111}-\Ket{1101}-\Ket{1011}
       -\Ket{1000}-\Ket{0100}-\Ket{0010}-\Ket{0001}\right)\right]\\
       &\propto\Ket{+}^R\Ket{+}^{{\left(v_4\right)}_0^R}\left(\Ket{000}^{{\left(v_1 v_2 v_3\right)}_0^R}+\cdots\right)\oplus\Ket{-}^R\Ket{-}^{{\left(v_4\right)}_0^R}\left(\Ket{000}^{{\left(v_1 v_2 v_3\right)}_0^R}+\cdots\right)\\
       &\downarrow{\text{Measurement in }\left\{\Ket{0}^{v_4},\Ket{1}^{v_4}\right\}}\\
       &\Ket{\Phi_{\left(\Ket{0}^{v_5},\Ket{0}^{v_4}\right)}^{3}}\\
       &=\frac{1}{2\sqrt{2}}\left[\Ket{0}^R\left(\Ket{000}^{v_1 v_2 v_3} + \Ket{110} + \Ket{011} - \Ket{101}\right) +\Ket{1}^R\left(\Ket{111}^{v_1 v_2 v_3} - \Ket{100}-\Ket{010}-\Ket{001}\right)\right]\\
       &\propto\Ket{+}^R\Ket{+}^{{\left(v_3\right)}_0^R}\left(\Ket{00}^{{\left(v_1 v_2\right)}_0^R}+\cdots\right)\oplus\Ket{-}^R\Ket{-}^{{\left(v_3\right)}_0^R}\left(\Ket{00}^{{\left(v_1 v_2\right)}_0^R}+\cdots\right)\\
       &\downarrow{\text{Measurement in }\left\{\Ket{0}^{v_3},\Ket{1}^{v_3}\right\}}\\
       &\Ket{\Phi_{\left(\Ket{0}^{v_5},\Ket{0}^{v_4},\Ket{0}^{v_3}\right)}^{2}}\\
       &=\frac{1}{2}\left[\Ket{0}^R\left(\Ket{00}^{v_1 v_2}+\Ket{11}\right)-\Ket{1}^R\left(\Ket{01}^{v_1 v_2}+\Ket{10}\right)\right]\\
       &\propto\Ket{-}^R\Ket{+}^{{\left(v_2\right)}_0^R}\Ket{+}^{{\left(v_1\right)}_0^R}\oplus\Ket{+}^R\Ket{-}^{{\left(v_2\right)}_1^R}\Ket{-}^{{\left(v_1\right)}_1^R},\\
       &\downarrow{\text{Measurement in }\left\{\Ket{0}^{v_2},\Ket{1}^{v_2}\right\}}\\
       &\Ket{\Phi_{\left(\Ket{0}^{v_5},\Ket{0}^{v_4},\Ket{0}^{v_3},\Ket{0}^{v_2}\right)}^{1}}=\frac{1}{\sqrt{2}}\left(\Ket{0}^{R}\Ket{0}^{v_1}-\Ket{1}^R\Ket{1}^{v_1}\right)\\
       &\downarrow{\text{Local isometry by }v_1}\\
       &\frac{1}{\sqrt{2}}\left(\Ket{0}^{R}\Ket{0}^{v_1}+\Ket{1}^R\Ket{1}^{v_1}\right)
    \end{split}
\end{equation}
where the Koashi-Imoto decomposition of the state for each step in the sequence~\eqref{eq:sequence} is shown after $\propto$ for the above states,
and we only show the sequence of states for the measurement outcomes corresponding to $\Ket{0}$'s, while those corresponding to other outcomes can be calculated in the same way.
Equation~\eqref{eq:ex3} shows that the five-qubit code can be decoded only by LOCC, \textit{i.e.,} without quantum communication.
Note that, if our algorithms are applied to quantum secret sharing, some subsets of the parties may extract partial knowledge about the shared secret of quantum information during the algorithms while this is the same situation as the existing algorithm in Reference~\cite{G4}.
\end{application}

\section{\label{sec:conclusion}Conclusion}
We quantitatively characterized nonlocal properties of multipartite quantum transformations for encoding and decoding quantum information in a multipartite system in terms of the entanglement cost.
For any tree-topology network connecting spatially separated parties $v_1,\ldots,v_N$,
we evaluated the entanglement costs of performing an isometry $U:\mathcal{H}\to\bigotimes_{k=1}^{N}\tilde{\mathcal{H}}^{v_k}$ representing encoding and the inverse $U^\dag:\bigotimes_{k=1}^{N}\tilde{\mathcal{H}}^{v_k}\to\mathcal{H}$ representing decoding, where the system $\mathcal{H}$ for logical states is located at one of the parties and each subsystem $\tilde{\mathcal{H}}^{v_k}$ for physical states is located at each party $v_k$.
Regarding the encoding, our algorithm for spreading quantum information is proven to achieve the optimal entanglement cost.
As for the decoding, our algorithm for concentrating quantum information can reduce the entanglement cost compared to that of spreading quantum information.
Hence, while $U$ and $U^\dag$ are inverse of each other, we derived bounds for quantitatively differentiating nonlocal properties of $U$ for encoding and $U^\dag$ for decoding in terms of entanglement cost.
We also demonstrated applications of our algorithms to multiparty tasks such as one-shot distributed source compression~\cite{D8,D9,A8} and LOCC-assisted decoding in quantum secret sharing~\cite{G4}.
The concept of encoding and the decoding represented by isometries has essential roles not only in quantum information science,
and we leave further investigation of applications within and beyond quantum information science for future works.

\section*{Acknowledgments}
This work was supported by Grant-in-Aid for JSPS Research Fellow and JSPS KAKENHI Grant Numbers 26330006, 15H01677, 16H01050, 17H01694, 18H04286, and 18J10192.


\begin{thebibliography}{10}
    \bibitem{G}D.\ Gottesman, arXiv:0904.2557.
    \bibitem{D}S.\ J.\ Devitt, W.\ J.\ Munro,  K.\ Nemoto, \textit{Rep.\ Prog.\ Phys.} \textbf{2013}, \textit{76}, 076001.
    \bibitem{T2}B.\ M.\ Terhal, \textit{Rev.\ Mod.\ Phys.} \textbf{2015}, \textit{87}, 307.
    \bibitem{B}B.\ J.\ Brown, D.\ Loss, J.\ K.\ Pachos, C.\ N.\ Self, J.\ R.\ Wootton, \textit{Rev.\ Mod.\ Phys.} \textbf{2016}, \textit{88}, 045005.
    \bibitem{K}A.\ Y.\ Kitaev, \textit{Ann.\ Phys.} \textbf{2003}, \textit{303}, 2.
    \bibitem{K2}A.\ Y.\ Kitaev, \textit{Ann.\ Phys.} \textbf{2006}, \textit{321}, 2.
    \bibitem{A}A.\ Almheiri, X.\ Dong,  D.\ Harlow, \textit{J.\ High Energy Phys.} \textbf{2015}, \textit{04}, 163.
    \bibitem{P}F.\ Pastawski, B.\ Yoshida, D.\ Harlow,  J.\ Preskill, \textit{J.\ High Energy Phys.} \textbf{2015}, \textit{06}, 149.
    \bibitem{F}F.\ G.\ S.\ L. Brand\~{a}o, E.\ Crosson, M.\ B.\ \c{S}ahino\u{g}lu,  J.\ Bowen, arXiv:1710.04631.
    \bibitem{B7}M.\ Hillery, V.\ Bu\u{z}ek,  A.\ Berthiaume, \textit{Phys.\ Rev.\ A} \textbf{1999}, \textit{59}, 1829.
    \bibitem{C}R.\ Cleve, D.\ Gottesman,  H.-K.\ Lo, \textit{Phys.\ Rev.\ Lett.} \textbf{1999}, \textit{83}, 648.
    \bibitem{G2}D.\ Gottesman, \textit{Phys.\ Rev.\ A} \textbf{2000}, \textit{61}, 042311.
    \bibitem{C8}E.\ Chitambar, G.\ Gour, arXiv:1806.06107.
    \bibitem{H2}R.\ Horodecki, P.\ Horodecki, M.\ Horodecki,  K.\ Horodecki, \textit{Rev.\ Mod.\ Phys.} \textbf{2009}, \textit{81}, 865.
    \bibitem{P2}M.\ B.\ Plenio, S.\ Virmani, \textit{Quantum Inf.\ Comput.} \textbf{2007}, \textit{7}, 1.
    \bibitem{L}E.\ H.\ Lieb, J.\ Yngvason, \textit{Phys.\ Rep.} \textbf{1999}, \textit{310}, 1.
    \bibitem{B5}C.\ H.\ Bennett, G.\ Brassard, C.\ Crepeau, R.\ Jozsa, A.\ Peres,  W.\ K.\ Wootters, \textit{Phys.\ Rev.\ Lett.} \textbf{1993}, \textit{70}, 1895.
    \bibitem{B2}C.\ H.\ Bennett, D.\ P.\ DiVincenzo, J.\ A.\ Smolin,  W.\ K.\ Wootters, \textit{Phys.\ Rev.\ A} \textbf{1996}, \textit{54}, 3824.
    \bibitem{H}P.\ M.\ Hayden, M.\ Horodecki,  B.\ M.\ Terhal, \textit{J.\ Phys.\ A} \textbf{2001}, \textit{34}, 6891.
    \bibitem{T}B.\ M.\ Terhal, P.\ Horodecki, \textit{Phys.\ Rev.\ A} \textbf{2000}, \textit{61}, 040301.
    \bibitem{Z}X.\ Zhou, D.\ W.\ Leung,  I.\ L.\ Chuang, \textit{Phys. Rev. A} \textbf{2000}, \textit{62}, 052316.
    \bibitem{E}J.\ Eisert, K.\ Jacobs, P.\ Papadopoulos,  M.\ B.\ Plenio, \textit{Phys. Rev. A} \textbf{2000}, \textit{62} 052317.
    \bibitem{C3}A.\ Chefles, C.\ R.\ Gilson,  S.\ M.\ Barnett, \textit{Phys.\ Rev.\ A} \textbf{2001}, \textit{63}, 032314.
    \bibitem{C2}D.\ Collins, N.\ Linden,  S.\ Popescu, \textit{Phys.\ Rev.\ A} \textbf{2001}, \textit{64}, 032302.
    \bibitem{N}M.\ A.\ Nielsen, C.\ M.\ Dawson, J,\ L.\ Dodd, A.\ Gilchrist, D.\ Mortimer, T.\ J.\ Osborne, M.\ J.\ Bremner, A.\ W.\ Harrow,  A.\ Hines, \textit{Phys.\ Rev.\ A} \textbf{2003}, \textit{67}, 052301.
    \bibitem{Y5}C.-P.\ Yang, \textit{Phys.\ Lett.\ A} \textbf{2008}, \textit{372}, 9, 25, 1380.
    \bibitem{C4}S.\ M.\ Cohen, \textit{Phys.\ Rev.\ A} \textbf{2010}, \textit{81}, 062316.
    \bibitem{Y}L.\ Yu, R.\ B.\ Griffiths,  S.\ M.\ Cohen, \textit{Phys.\ Rev.\ A} \textbf{2010}, \textit{81}, 062315.
    \bibitem{S}D.\ Stahlke, R.\ B.\ Griffiths, \textit{Phys.\ Rev.\ A} \textbf{2011}, \textit{84}, 032316.
    \bibitem{S2}A.\ Soeda, P.\ S.\ Turner,  M.\ Murao, \textit{Phys.\ Rev.\ Lett.} \textbf{2011}, \textit{107}, 180501.
    \bibitem{Y4}L.\ Chen, L.\ Yu, \textit{Phys.\ Rev.\ A} \textbf{2014}, \textit{89}, 062326.
    \bibitem{Y2}L.\ Chen, L.\ Yu, \textit{Ann.\ Phys.} \textbf{2014}, \textit{351}, 682.
    \bibitem{S3}D.\ Saha, S.\ Nandan,  P.\ K.\ Panigrahi, J.\ \textit{Quantum Inf.\ Sci.} \textbf{2014}, \textit{4}, 2, 46117.
    \bibitem{X}L.-P.\ Xue, M.\ Jiang, \textit{2015 34th Chinese Control Conference (CCC)} \textbf{2015}, \textit{6636}.
    \bibitem{C5}L.\ Chen, L.\ Yu, \textit{Phys.\ Rev.\ A} \textbf{2016}, \textit{93}, 042331.
    \bibitem{V}N.\ Vyas, D.\ Saha,  P.\ K. Panigrahi, \textit{Quantum Inf.\ Processing} \textbf{2016}, \textit{15}, 3855.
    \bibitem{Y3}L.\ Yu, K.\ Nemoto, \textit{Phys.\ Rev.\ A} \textbf{2016}, \textit{94}, 022320.
    \bibitem{W}E.\ Wakakuwa, A.\ Soeda,  M.\ Murao, \textit{IEEE Trans.\ Inf.\ Theory} \textbf{2017}, \textit{63}, 5372.
    \bibitem{W2}E.\ Wakakuwa, A.\ Soeda,  M.\ Murao, arXiv:1608.07461.
    \bibitem{W5}E.\ Wakakuwa, A.\ Soeda,  M.\ Murao, arXiv:1810.08447.
    \bibitem{J}R.\ Jozsa, M.\ Koashi, N.\ Linden, S.\ Popescu, S.\ Presnell, D.\ Shepherd,  A.\ Winter, \textit{Quantum Inf.\ Comput.} \textbf{2003}, \textit{3}, 5, 405.
    \bibitem{B3}S.\ Bandyopadhyay, G.\ Brassard, S.\ Kimmel,  W.\ K.\ Wootters, \textit{Phys.\ Rev.\ A} \textbf{2009}, \textit{80}, 012313.
    \bibitem{B4}S.\ Bandyopadhyay, R.\ Rahaman,  W.\ K.\ Wootters, \textit{J.\ Phys.\ A} \textbf{2010}, \textit{43}, 45.
    \bibitem{Y6}H.\ Yamasaki, A.\ Soeda,  M.\ Murao, \textit{Phys.\ Rev.\ A} \textbf{2017}, \textit{96}, 032330.
    \bibitem{G3}E.\ F.\ Galv\~{a}o, L.\ Hardy, \textit{Phys.\ Rev.\ A} \textbf{2000}, \textit{62}, 012309.
    \bibitem{Y7}S.\ Yang, H.\ Jeong, \textit{Phys.\ Rev.\ A} \textbf{2015}, \textit{92}, 022322.
    \bibitem{E2}J.\ Eisert, D.\ Gross, in \textit{Lectures on Quantum Information}, (Eds: D.\ Bru\ss, G.\ Leuchs), Wiley, Weinheim \textbf{2007}, Ch.\ 13.
    \bibitem{W3}M.\ Walter, D.\ Gross,  J.\ Eisert, arXiv:1612.02437.
    \bibitem{B8}I.\ Bengtsson, K.\ \.{Z}yczkowski, \textit{Geometry of Quantum States: An Introduction to Quantum Entanglement}, Cambridge University Press, New York \textbf{2017}, Ch.\ 17.
    \bibitem{K6}H.\ J.\ Kimble, \textit{Nature} \textbf{2008}, \textit{453}, 1023.
    \bibitem{B6}J.\ A.\ Bondy, U.\ S.\ R.\ Murty, \textit{Graph Theory}, Springer, London \textbf{2008}.
    \bibitem{Y12}H.\ Yamasaki, A.\ Pirker, M.\ Murao, W.\ D\"{u}r, B.\ Kraus, \textit{Phys.\ Rev.\ A} \textbf{2018}, \textit{98}, 052313.
    \bibitem{F3}B.\ Fortescue, G.\ Gour, \textit{IEEE Trans.\ Inf.\ Theory} \textbf{2012}, \textit{58}, 6659.
    \bibitem{S5}K.\ Senthoor, P.\ K.\ Sarvepalli, arXiv:1801.09500.
    \bibitem{Y11}H.\ Yamasaki, M.\ Murao, arXiv:1806.07875.
    \bibitem{C6}R.\ H.\ Choi, B.\ Fortescue, G.\ Gour,  B.\ C.\ Sanders, \textit{Phys.\ Rev.\ A} \textbf{2013}, \textit{87}, 032319.
    \bibitem{R}R.\ Rahaman,  M.\ G.\ Parker, \textit{Phys.\ Rev.\ A} \textbf{2015}, \textit{91}, 022330.
    \bibitem{Y10}Y.-H.\ Yang, F.\ Gao, X.\ Wu, S.-J.\ Qin, H.-J.\ Zuo,  Q.-Y.\ Wen, \textit{Sci.\ Rep.} \textbf{2015}, \textit{5}, 16967.
    \bibitem{W6}J.\ Wang, L.\ Li, H.\ Peng,  Y.\ Yang, \textit{Phys.\ Rev.\ A} \textbf{2017}, \textit{95}, 022320.
    \bibitem{B11}C.-M.\ Bai, Z.-H. Li, C.-J.\ Liu, Y.-M.\ Li, \textit{Quantum Inf.\ Processing} \textbf{2017}, \textit{16}, 304.
    \bibitem{L3}C.-J.\ Liu, Z.-H.\ Li, C.-M.\ Bai,  M.-M.\ Si, \textit{Int.\ J.\ Theor.\ Phys.} \textbf{2018}, \textit{57} 428.
    \bibitem{D8}N.\ Dutil, P.\ Hayden, arXiv:1011.1974.
    \bibitem{D9}N.\ Dutil, \textit{PhD Thesis}, McGill University, May, \textbf{2011}.
    \bibitem{A8}A.\ Anshu, R.\ Jain, N.\ A.\ Warsi, \textit{IEEE Trans.\ Inf.\ Theory} \textbf{2018}, \textit{64}, 3.
    \bibitem{G4}V.\ Gheorghiu, B.\ C.\ Sanders, \textit{Phys.\ Rev.\ A} \textbf{2013}, \textit{88}, 022340.
    \bibitem{C7}E.\ Chitambar, D.\ Leung, L.\ Mancinska, M.\ Ozols,  A.\ Winter, \textit{Commun.\ Math.\ Phys.} \textbf{2014}, \textit{328}, 1, 303.
    \bibitem{P3}J.\ Preskill, arXiv:1604.07450.
    \bibitem{K3}M.\ Koashi, N.\ Imoto, \textit{Phys.\ Rev.\ A} \textbf{2002}, \textit{66}, 022318.
    \bibitem{H6}P.\ Hayden, R.\ Jozsa, D.\ Petz,  A.\ Winter, \textit{Commun.\ Math.\ Phys.} \textbf{2004}, \textit{250}, 371.
    \bibitem{K5}R.\ Blume-Kohout, H.\ K.\ Ng, D.\ Poulin, L.\ Viola, \textit{Phys.\ Rev.\ A} \textbf{2010}, \textit{82}, 062306.
    \bibitem{W4}E.\ Wakakuwa, A.\ Soeda, M.\ Murao, \textit{IEEE Trans. Inf. Theory} \textbf{2017}, \textit{63}, 2, 1280.
    \bibitem{L2}H.-K.\ Lo, S.\ Popescu, \textit{Phys.\ Rev.\ A} \textbf{2001}, \textit{63}, 022301.
\end{thebibliography}
\end{document}